\newcommand{\VP}{\textsc{VP}}
\newcommand{\VNP}{\textsc{VNP}}
\newcommand{\cut}{\text{Cut}^2}
\newcommand{\p}{\textsc{P}}
\newcommand{\NP}{\textsc{NP}}
\newcommand{\VAC}{\textsc{VAC}_0}
\newcommand{\per}{\textsc{per}}
\renewcommand{\multimap}{\multimapdotboth}
\newtheorem{thm}{Theorem}
\newtheorem*{thm*}{Theorem}
\newtheorem{lem}{Lemma}
\newtheorem*{lem*}{Lemma}
\newtheorem{cor}{Corollary}
\newtheorem{prop}{Proposition}
\theoremstyle{definition}
\newtheorem{defi}{Definition}
\theoremstyle{remarks}
\newtheorem*{rem}{Remark}
\begin{document}
\title{A Dichotomy Theorem for Homomorphism Polynomials}
\author{N. de Rugy-Altherre \\Univ Paris Diderot, Sorbonne Paris Cité,
\\Institut de Mathématiques de Jussieu, UMR 7586 CNRS, \\F-75205 Paris, France\\
nderugy@math.univ-paris-diderot.fr}

\maketitle

\abstract
In the present paper we show a dichotomy theorem for the complexity of polynomial evaluation. We associate to each graph $H$ a polynomial that encodes all graphs of a fixed size homomorphic to $H$. We show that this family is computable by arithmetic circuits in constant depth if  $H$ has a loop or no edge and that it is hard  otherwise (i.e., complete for $\VNP$, the arithmetic class related to $\#P$). We also demonstrate the hardness over $\mathbb Q$ of cut eliminator, a polynomial defined by Bürgisser which is known to be neither $\VP$ nor $\VNP$-complete in $\mathbb F_2$, if $\VP \neq \VNP$ ($\VP$ is the class of polynomials computable by arithmetic circuit of polynomial size).

\section{Introduction}
Dichotomy theorems are a way to classify entire classes of problems by their complexity. For instance, Schaefer~\cite{Sch78} has classified a subclass of CSP as being in $\p$ or $\NP$-complete. Whether such a theorem holds for every CSP is a famous open question, the Dichotomy Conjecture~\cite{FV98}. More recently, dichotomy theorems for counting homomorphism problems have been studied by several authors~\cite{DG00,BG04,GT11}. In Valiant's theory of polynomial evaluation, an equivalent of Schaefer's theorem has been produced by Briquel and Koiran ~\cite{BK09}, opening the way for such theorems in this theory. 

In this paper, we define the class of \emph{homomorphism polynomials}: if $H$ is a graph, the monomials of the associated homomorphism polynomial $f^H_n$ encode the list of graphs of size $n$ homomorphic to $H$. Our main result is a dichotomy theorem for this class of polynomials, in Valiant's theory : $(f^H_n)$ is computable by unbounded fan-in arithmetic circuits in constant depth (the Valiant equivalent of AC$_0$) if  $H$ has a loop or no edge and hard  otherwise (i.e., complete for $\VNP$, the arithmetic class related to $\#P$).

Deciding the existence of or counting homomorphisms are important problems which generalize many natural questions ($k$-coloring, acyclicity, binary CSP, etc~\cite{HN04}). The weighted version, which is very similar to our polynomials, has several applications in statistical physics~\cite{Wel93}. 

The connection between Valiant's theory of polynomial evaluation and counting problems is well known. For instance,  the permanent is complete in both settings~\cite{Val791,Val792}. Homomorphism polynomials are therefore linked to counting homomorphism problems: the evaluation of $f^H_n$ counts the number of graphs of size $n$ homomorphic to $H$. But these polynomials also tell us something about enumeration: it is equivalent to enumerate this set of graphs and to enumerate all the monomials of $f^H_n$. We will use this property to demonstrate a result on representations by first order formulas of the bipartite property for graphs.

The hard part of our dichotomy theorem is to demonstrate the hardness of most homomorphism polynomials. We therefore obtain many  new $\VNP$-complete families (indeed unlike $\NP$-complete problems, there are not a lot of $\VNP$-complete polynomials known). These families can be seen as generating functions in the way introduced by Jerrum and Bürgisser~\cite{Bur00}. No general hardness theorem has been found for these functions (but there are large results for tractable functions~\cite{CMR01}). As stated before, many problems can be seen as homomorphism problems. Our theorem therefore implies the hardness of many generating functions.

Moreover, Bürgisser~\cite{Bur00} gives a family of polynomials, $\cut$, which is neither $\VNP$-complete nor $\VP$ over $\mathbb F_2$, if $\VNP \neq \VP$ of course. He also demonstrates the existence of such a family in every field, but does not give a specific family for $\mathbb Q$. He wonder whether $\cut$ can be such a family in $\mathbb Q$. During the demonstration of our theorem, we will show the $\VNP$-completeness of $\cut$ and therefore answer negatively the question.

\section{Definitions and general discussions}

A polynomial enumerates a family $\mathcal P$ of graphs if each monomial of this polynomial encodes a single graph $G$ in $\mathcal P$ and each graph $G \in \mathcal P$ is encoded by a single monomial. In this paper we study polynomials enumerating graphs $G$ homomorphic to a graph $H$ using a simple code: the variables of the polynomial are the set $\{x_e, e \in E(K_n)\}$; the monomial coding $G$ is $\prod_{e \in E(G)} x^e$.  These polynomials are therefore multilinear and have only $0$ or $1$ for coefficients. More precisely

\begin{defi}
Let $H$ be a graph. The polynomial enumerating all graphs $G$ with $n$ vertices which are homomorphic to $H$ is:
	\[f^H_n((x_e)_{e \in E(K_n)}) := \sum_{\bar \epsilon \in \{0,1\}^{|E(K_n)|}} \Phi_n^H(\bar \epsilon) \prod_{e \in E(K_n)} x_e^{\epsilon_n}\]
	Where $\Phi_n^H(\bar \epsilon)$ is equal to $1$ if the graph $G$ such that $V(G) = [n]$ and $E(G) = \{e \in E(K_n) | \epsilon_e = 1 \}$ is homomorphic to $H$, $0$ otherwise.
\end{defi}

We work within Valiant's algebraic framework. Here is a brief introduction to this complexity theory. For a more complete overview, see~\cite{Bur00}.

An \textit{arithmetic circuit} over a field $\mathbb K$ is a labeled directed acyclic connected graph with vertices of indegree $0$ or $2$ and only one sink. The vertices with indegree $0$ are called \textit{input gates} and are labeled with variables or constants from $\mathbb K$. The vertices with indegree $2$ are called \textit{computation gates} and are labeled with $\times$ or $+$. The sink of the circuit is called the \textit{output gate}.

The polynomial computed by a gate of the arithmetic circuit is defined by induction : an input gate computes its label; a computation gate computes the product or the sum of its children's' values. The polynomial computed by an arithmetic circuit is the polynomial computed by the sink of the circuit.

A \textit{p-family} is a sequence $(f_n)$ of polynomials over a field $\mathbb K$ such that the number of variables as well as the degree of $f_n$ is polynomially bounded in $n$. The \textit{complexity} $L(f)$ of a polynomial $f \in \mathbb K[x_1, \hdots, x_n]$ is the minimal number of computational gates of an arithmetic circuit computing $f$ from variables $x_1, \hdots, x_n$ and constants in $\mathbb K$. 

Two of the main classes in this theory are: the analog of $\p$, $\VP$, which contains of every p-family $(f_n)$ such that $L(f_n)$ is a function polynomially bounded in $n$; and the analog of $\NP$, $\VNP$. A p-family $(f_n)$ is in $\VNP$ iff there exists a $\VP$ family $(g_n)$ such that for all $n$, $f_n(x_1,\hdots, x_n) = \sum_{\bar{\epsilon} \in \{0,1\}^n} g_n(x_1,\hdots, x_n, \epsilon_1, \hdots, \epsilon_n)$.

Furthermore we introduce the Valiant equivalent of AC$_0$, $\VAC$ as the class of every p-family $(f_n)$ such that there exists a circuit of constant depth but with unfounded fan-in for the computation gates computing $f_n$ and of size polynomially bounded in $n$. It has been introduce in~\cite{MR09}.

As in most complexity theory we have a natural notion of reduction: a polynomial $f$ is a projection of a polynomial $g$, write $f \leq_p g$, if there are values $a_i \in \mathbb K \cup \{x_1, \hdots x_n\}$ such that $f(\bar x) = g(\bar a)$. A p-family $(f_n)$ is a p-projection of $(g_n)$ if there exists a polynomially bounded function $p$ such that for every $n$, $f_n \leq_p g_{p(n)}$. As usual we say that $(g_n)$ is complete for a class $C$ if, for every $(f_n) \in C$, $(f_n) \leq_p (g_n)$ and if $(g_n) \in C$.

The three classes presented above are closed under projections. We will need in this paper a second notion of reduction, the c-reduction: the oracle complexity $L^g(f)$ of a polynomial $f$ with the knowledge of $g$ is the minimum number of computation gates and evaluations of $g$ over previously computed values that are sufficient to compute $f$ from the variables $x_1, \hdots x_n$ and constants from $\mathbb K$. A p-family $(f_n)$ \textit{c-reduces} to $(g_n)$ if there exists a polynomially bounded function $p$ such that $L^{g_{p(n)}}(f_n)$ is a polynomially bounded function. 

This reduction is more powerful than the projection: if $(f_n) \leq_p (g_n)$, then $(f_n) \leq_c (g_n)$. With this reduction, $\VNP$ is still closed, but it is harder to demonstrate (See~\cite{Poi08} for a idea of the proof). However this reduction does not distinguish the lower classes. For example, $0$ is $\VP$-complete for c-reductions. 

The notion of c-reductions was introduced by Bürgisser~\cite{Bur00} in his work on p-families that are neither $\VP$ nor $\VNP$-complete (if $\VNP \neq \VP$). It has been rarely used to demonstrate the hardness of computing polynomials, though there is a recent example in~\cite{Bri11}. 

Bürgisser demonstrates the existence of such a family in every field with an abstract embedding theorem. Furthermore he gives a specific family for some finite field, the \textit{cut enumerator} $\Cut[q]_n$:
\[\Cut[q]_n := \sum_S \prod_{i \in A,j \in B} x_{ij}^{q-1}\]
where the sum is over all cuts $S=\{A,B\}$ of the complete graph $K_n$ on the set of nodes $\{1,\hdots,n\}$. (A cut of a graph is a partition of its set of nodes into two nonempty subsets).

\begin{thm}(Bürgisser, 5.22)\label{intro_Burgisser}
	Let $q$ be a power of the prime $p$. The family of cut enumerators $\Cut[q]$ over a finite field $\mathbb{F}_q$ is neither $\VP$ nor $\VNP$-complete with respect to c-reductions, provided $\text{Mod}_p \textsc{NP} \nsubseteq \textsc{P} / \text{poly}$. The latter condition is satisfied if the polynomial hierarchy does not collapse at the second level. 
\end{thm}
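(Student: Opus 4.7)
The plan is to establish the two non-completeness claims separately, each by deriving a contradiction from the hypothesis $\text{Mod}_p \NP \nsubseteq \p/\text{poly}$. Both arguments rest on the structural fact that, over $\mathbb{F}_q$, the monomial $x^{q-1}$ acts as the indicator of ``$x \neq 0$'': it equals $1$ on every nonzero element of $\mathbb{F}_q$ and $0$ at $0$. Consequently $\Cut[q]_n$ is a ``support-only'' polynomial, whose value on any point of $\mathbb{F}_q^{E(K_n)}$ depends only on which coordinates are zero. As a preliminary step I would verify the routine fact that $\Cut[q] \in \VNP$ via Valiant's criterion, since recognizing whether a prescribed $\{0,q{-}1\}$-exponent pattern is the crossing-edge pattern of a bipartition of $[n]$ is polynomial time decidable.

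To prove $\Cut[q]$ is not $\VP$-complete under c-reductions, note that c-completeness requires membership in the class, so it suffices to establish $\Cut[q] \notin \VP(\mathbb{F}_q)$. The standard transfer lemma placing evaluations of $\VP(\mathbb{F}_q)$-families inside $\p/\text{poly}$, combined with carefully chosen variable substitutions that encode $\text{Mod}_p \NP$-hard Boolean instances into the polynomial identity of $\Cut[q]$, would yield a $\text{Mod}_p \NP$-hard Boolean function inside $\p/\text{poly}$, contradicting the hypothesis. The subtle point is that the raw evaluation function of $\Cut[q]$ on Boolean inputs is in fact tractable (for instance it reduces to counting connected components of the complement), so one must exploit the richer polynomial identity by substituting structured tuples of variables that encode hard gadgets into the cut structure.

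To prove $\Cut[q]$ is not $\VNP$-complete under c-reductions, I would argue by contradiction from $\per \leq_c \Cut[q]$. Because $\Cut[q]$ depends only on the supports of its arguments, each oracle query in the c-reduction can be simulated by a $\text{Mod}_p \NP$ procedure that nondeterministically guesses a cut $\{A,B\}$ and verifies the support condition on its crossing edges modulo $p$. Combined with the Boolean simulation of the remaining $\mathbb{F}_q$-arithmetic, this places the permanent on $\{0,1\}$-inputs, which is $\#\p$-hard, inside $\p^{\text{Mod}_p \NP}/\text{poly}$. Toda's theorem $\text{PH} \subseteq \p^{\#\p}$ together with a standard quantifier-collapse argument then forces $\text{Mod}_p \NP \subseteq \p/\text{poly}$, contradicting the hypothesis.

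The main obstacle in both parts is the tracking of supports through the c-reduction's arithmetic circuit: intermediate $\mathbb{F}_q$-values produced by the reduction may undergo cancellations that collapse supports unpredictably, so the oracle-query supports are not obviously captured by a $\text{Mod}_p \NP$ machine. I expect the resolution to use either a Schwartz--Zippel-style random substitution that preserves generic supports on every wire with high probability, or Bürgisser's embedding machinery, which normalizes the c-reduction so that all oracle queries act on Boolean-support inputs. This technical simulation lemma is the combinatorial heart of Bürgisser's original argument.
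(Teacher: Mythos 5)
This theorem is not proved in the paper at all: it is quoted verbatim from Bürgisser's monograph (his Theorem 5.22) and used as a black box, so there is no internal proof to compare your attempt against. Judged as a reconstruction of Bürgisser's argument, your central structural observation is the right one — over $\mathbb{F}_q$ the map $x \mapsto x^{q-1}$ is the indicator of ``$x \neq 0$'', so $\text{Cut}^q_n$ is support-only on $\mathbb{F}_q$-points and its evaluation there reduces to counting connected components of the zero-graph (the value is $2^{c-1}-1$ reduced mod $p$). Both halves of your plan, however, have genuine gaps.

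For the not-$\VP$ direction, the missing idea is evaluation over \emph{proper extension fields}. Any substitution of $\mathbb{F}_q$-constants — however ``structured'' — yields an easy evaluation, because the support-only collapse holds at every $\mathbb{F}_q$-point, not just Boolean ones; so the gadgets you propose cannot encode anything hard and that half of the argument cannot close as written. Over $\mathbb{F}_{q^k}$ the power $\xi^{q-1}$ is a nontrivial root of unity, and substituting such elements on the edges of a given graph turns $\text{Cut}^q_n$ into a cut-counting partition function of the form $\sum_{S}\zeta^{(q-1)|\delta(S)\cap E(G)|}$, whose computation is $\text{Mod}_p\textsc{NP}$-hard; since a polynomial-size circuit over $\mathbb{F}_q$ can be evaluated at extension-field points in $\textsc{P}/\text{poly}$, membership in $\VP$ would contradict the hypothesis. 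For the not-$\VNP$-complete direction, your complexity bookkeeping does not produce a contradiction: placing the mod-$p$ permanent in $\textsc{P}^{\text{Mod}_p\textsc{NP}}/\text{poly}$ is unconditionally true and is compatible with $\text{Mod}_p\textsc{NP} \nsubseteq \textsc{P}/\text{poly}$. The correct and simpler route is that each oracle query is answered \emph{deterministically} in polynomial time: after fixing a $0$--$1$ input, every wire of the oracle circuit carries a concrete element of $\mathbb{F}_q$ obtained by simulating the circuit, the support of each query tuple is read off directly, and the query value is a connected-component count. Hence a c-reduction from a family whose base-field evaluation is $\text{Mod}_p\textsc{P}$-hard would put that function in $\textsc{P}/\text{poly}$ outright; your worry about cancellations collapsing supports is moot for the same reason, and no Schwartz--Zippel or Toda machinery is needed. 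Finally, note that for $p=2$ the permanent coincides with the determinant and is easy, so the hard family must be taken to be the Hamiltonian cycle polynomials rather than $\per$.
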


Howerver, the cut enumerator can be rewritten in the following way :
\[\Cut[q]_n(\bar x) = \sum_{\substack{V \subsetneq [n] \\ V \neq \emptyset}} \prod_{i \in V} \prod_{j \in V^c} x_{i,j}^{q-1}\]

This polynomial will be studied in Lemma~\ref{bip-complet-oriente}, for other purposes. In particular, we will demonstrate that this family is $\VNP$-complete in $\mathbb Q$ if $q=2$, which answers a question of Bürgisser. Furthermore, this implies that our demonstration does not work in  a field of characteristic $2$ (if $\VP \neq \VNP$).

The main theorem of our paper is:
\begin{thm}\label{theoreme}
Let $H$ be a graph. Let $f_n^H$ be the polynomial enumerating all graphs homomorphic to $H$. Then, over $\mathbb Q$
\begin{itemize}
	\item If $H$ has a loop or no edges, $(f_n^H)$ is in $\VAC$.
	\item Else $(f_n^H)$ is VNP-complete for c-reductions.
\end{itemize}	 
\end{thm}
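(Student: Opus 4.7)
The easy direction is immediate. If $H$ has a loop at some vertex $v$, the constant map sending every vertex of $G$ to $v$ is a homomorphism $G \to H$, so every $\bar\epsilon$ satisfies $\Phi_n^H(\bar\epsilon) = 1$ and
\[
 f_n^H(\bar x) \;=\; \prod_{e \in E(K_n)} (1 + x_e),
\]
which is computed by a constant-depth circuit of polynomial size with unbounded fan-in, so $(f_n^H) \in \VAC$. If $H$ has no edges then only the empty graph on $[n]$ is homomorphic to $H$, whence $f_n^H = 1 \in \VAC$.

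For the hardness direction I would first reduce to the case where $H$ is a loopless core with at least one edge, using the classical fact that $G \to H$ iff $G \to \mathrm{core}(H)$, so that $f_n^H = f_n^{\mathrm{core}(H)}$. The central sub-case is $H = K_2$: graphs homomorphic to $K_2$ are exactly the bipartite graphs, so $f_n^{K_2}$ is the bipartite enumerator, and more generally $f_n^H = f_n^{K_2}$ for every bipartite $H$ with an edge. To prove $f_n^{K_2}$ is $\VNP$-complete my plan is to go through Bürgisser's cut enumerator in its rewritten form
\[
 \cut_n(\bar x) \;=\; \sum_{\emptyset \neq V \subsetneq [n]} \; \prod_{i \in V}\prod_{j \in V^c} x_{i,j},
\]
first by c-reducing a known $\VNP$-complete family (for instance the permanent) to $\cut_n$, via substitutions of the edge variables that force only cuts encoding permutation matrices to contribute; this yields the $\VNP$-completeness of $\cut$ over $\mathbb Q$ announced in the introduction and answers Bürgisser's question negatively. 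Then I would c-reduce $\cut_n$ to $f_n^{K_2}$ and back using the substitution $x_{i,j} \mapsto 1 + x_{i,j}$ together with an inclusion–exclusion step correcting the overcount coming from the fact that a bipartite graph with $k$ connected components admits $2^k$ bipartitions.

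It remains to handle a non-bipartite core $H$, which is the step I expect to be the main obstacle. Here $f_n^H$ strictly extends $f_n^{K_2}$ (all bipartite graphs are still homomorphic to $H$, but many non-bipartite ones are too), so the bipartite enumerator cannot be read off by a direct projection. The strategy I would try is to design, for each fixed loopless core $H$ with a distinguished edge $uv$, a combinatorial gadget on $[n]$: choose a small distinguished set of vertices and set their incident edge variables to $0$ or $1$ so that in any homomorphism $G \to H$ the gadget is mapped rigidly, and the rigidity forces the remaining vertices to be sent into $\{u,v\}$, thereby isolating bipartite subgraphs in the sum defining $f_n^H$. Making this work uniformly in $H$ relies on $H$ being a loopless core (so that no non-trivial vertex identifications are allowed and the gadget really is rigid), and it seems to genuinely need c-reductions rather than projections, since several oracle calls of varying sizes may be needed to combine gadgets and then recover the bipartite enumerator by interpolation. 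This combinatorial step is where I expect the bulk of the technical work to lie.
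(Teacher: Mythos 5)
Your treatment of the easy cases is correct and matches the paper. In the hard direction, however, the two load-bearing steps are both left as sketches, and the sketch for the non-bipartite case rests on an idea that does not work as described. The indicator $\Phi_n^H(\bar\epsilon)$ is a property of the graph $G$ encoded by a monomial alone --- it asks whether \emph{some} homomorphism $G\to H$ exists --- so setting a few edge variables to $0$ or $1$ can only select which graphs $G$ survive; it cannot constrain which homomorphism witnesses $G\to H$, and vertices of $G$ not adjacent to your gadget are entirely unconstrained. In particular, non-bipartite graphs homomorphic to a non-bipartite core $H$ keep contributing monomials no matter how rigidly a small gadget sits inside them, so bipartiteness is not ``isolated.'' The paper avoids this entirely with Dyer and Greenhill's neighborhood operation: add one apex vertex joined to all of $[n]$ with edge weight $y$ and extract the homogeneous component of degree $n$ in $y$ (a c-reduction by interpolation); graphs-with-apex homomorphic to $H$ correspond exactly to graphs homomorphic to $H'$, the disjoint union of the subgraphs induced by the neighborhoods of the vertices of $H$. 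Because $H$ is loopless, the maximum degree of $H'$ is strictly smaller than that of $H$, so iterating finitely many times reaches a $1$-regular graph, bihomomorphic to $K_2$. This yields $f_n^{K_2}\leq_c f_n^H$ uniformly, with no gadget construction and no case analysis on the structure of the core.

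The second gap is the hardness of the bipartite enumerator itself. Reducing the permanent to $\cut_n$ by substitutions ``forcing only cuts encoding permutation matrices to contribute'' is implausible as stated: every monomial of $\cut_n$ is the full product $\prod_{i\in V}\prod_{j\in V^c}x_{i,j}$ over a cut, so a projection can only merge or kill these $2^n$ complete-bipartite monomials and has no mechanism for carving out systems of distinct representatives. The paper instead starts from the clique generating function (known to be $\VNP$-complete by B\"urgisser) and proves the chain $\mathrm{GF}(K_n,\text{clique})\leq_c F_n\leq_p \cut_n\leq_c G_n\leq_c f_n^{K_2}$, where $F_n$ and $G_n$ enumerate complete bipartite graphs; the final step from $G_n$ to $f_n^{K_2}$ does not use your proposed $x_{i,j}\mapsto 1+x_{i,j}$ substitution with inclusion--exclusion over the $2^k$ bipartitions, but rather attaches two adjacent apex vertices and takes several nested homogeneous components to force the surviving bipartite graphs to be complete on a prescribed bipartition. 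Your identification of $\cut$ as a waypoint (and of the overcounting issue) is on the right track, but neither of the two reductions you would need is actually carried out, and the one you do outline for $\cut$ would need to be replaced.
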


Proofs that all our p-families are p-families and are in $\VNP$ are left to the reader. It is a simple application of Valiant's criterion~\cite{Bur00}. We will demonstrate our main theorem in four steps: first we will deal with the easy cases. Secondly, using Dyer and Greenhill's ideas, we will reduce the problem to the case where $H$ is as simple as possible, i.e., with only one edge and two vertices, $\multimap$. In a third step, we will note that the monomial of $f^{\multimap}$ have a kind of ``hereditar'' property which we will use to reduce the problem to a simpler polynomial. Finally, we will demonstrate the $\VNP$-completeness of this simpler polynomial.

As application of our theorem and using result on enumeration, we can prove that:
\begin{cor}\label{enumeration}
If $\VP \neq \VNP$, and if the product of two boolean matrices of size $n$ cannot be computed in $\mathcal O(n^2)$, being bipartite cannot be expressed by an acyclic conjunctive first order formula of polynomial size on the size of the studied graph.
\end{cor}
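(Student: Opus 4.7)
The first step is to instantiate Theorem~\ref{theoreme} at $H = \multimap$: a graph is bipartite if and only if it is homomorphic to the single-edge graph $\multimap$, which has an edge and no loop. The main theorem thus yields that $(f_n^{\multimap})$ is $\VNP$-complete over $\mathbb Q$ for c-reductions; under the hypothesis $\VP \neq \VNP$, no polynomial-size arithmetic circuit computes this family.

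I would argue by contrapositive. Suppose bipartiteness of $n$-vertex graphs is expressed by an acyclic conjunctive first-order formula $\phi_n$ of polynomial size. By the observation recorded in Section~2, the monomials of $f_n^{\multimap}$ are in bijection with the bipartite graphs on $[n]$, so $\phi_n$ is, up to reformulation, an enumeration scheme for these monomials. I would then feed $\phi_n$ into a theorem on the enumeration of acyclic conjunctive queries (in the spirit of Bagan--Durand--Grandjean), whose preprocessing phase reduces to boolean matrix products on $n \times n$ matrices followed by constant-delay output.

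The compilation step turns the acyclic join-tree of $\phi_n$ into a sum-of-products arithmetic circuit for $f_n^{\multimap}$: at each node of the tree, the dynamic-programming table of partial matches becomes a gate, and joins at internal nodes correspond to algebraic products of indicator-matrices in the variables $x_e$. The size of this circuit is bounded by the enumeration preprocessing cost, hence by the cost of boolean matrix multiplication on $n \times n$ matrices. Combining the two hypotheses of the corollary then pins the argument down: the $\VP \neq \VNP$ hypothesis forces $(f_n^{\multimap}) \notin \VP$, while the boolean matrix multiplication bound controls what circuit size is achievable via $\phi_n$. Together they rule out the existence of such a formula family.

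The main obstacle is the compilation itself: faithfully translating a combinatorial enumeration algorithm, with its pointers and dynamic-programming tables, into a straight-line arithmetic circuit over $\mathbb Q$ of comparable size. Acyclicity of $\phi_n$ is essential, since it is precisely what makes the sum-product recursion along the join tree well-defined as a polynomial identity; the matrix-multiplication hypothesis is what controls the resulting circuit size and ensures that both conditions of the corollary play a genuine role in closing off the construction.
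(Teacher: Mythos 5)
Your overall architecture matches the paper's sketch: instantiate Theorem~\ref{theoreme} at $H=\multimap$, assume the formula family $\phi_n$ exists, convert it into a polynomial-size arithmetic circuit for $f_n^{\multimap}$, and contradict $\VNP$-completeness. But the way you deploy the matrix-multiplication hypothesis is backwards, and this is a genuine gap rather than a presentational one. You claim the circuit obtained from $\phi_n$ has size ``bounded by the enumeration preprocessing cost, hence by the cost of boolean matrix multiplication on $n\times n$ matrices.'' If that were the argument, the hypothesis that this cost is \emph{not} $\mathcal O(n^2)$ would be useless: boolean matrix multiplication is at worst $\mathcal O(n^3)$, so your circuit would be of polynomial size unconditionally and the corollary would need only $\VP\neq\VNP$. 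The hypothesis is a conditional \emph{lower bound}, and it enters through the hardness direction of Bagan's dichotomy for acyclic conjunctive queries, not through any size bound on a circuit.

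The missing chain is the following. Independently of $\phi_n$, the set of bipartite graphs on $[n]$ can be enumerated with constant delay. Bagan's dichotomy says that an acyclic conjunctive query whose answers can be enumerated with constant delay must be free-connex (of star size $1$) --- \emph{unless} two $n\times n$ boolean matrices can be multiplied in $\mathcal O(n^2)$. Under the corollary's second hypothesis, $\phi_n$ is therefore forced to have star size $1$. Only then does the ``compilation'' you gesture at become available: Durand and Mengel show that for queries of star size $1$ the associated polynomial $P(\phi_n)$ lies in $\VP$, with no reference to matrix multiplication. Since $P(\phi_n)=f_n^{\multimap}$ is $\VNP$-complete by the main theorem, this contradicts $\VP\neq\VNP$. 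Without the structural restriction to star size $1$, there is no known way to turn an arbitrary acyclic $\phi_n$ into a polynomial-size arithmetic circuit for $f_n^{\multimap}$, so your compilation step as stated does not go through.
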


We just give here an idea of the demonstration. Let $\phi_n$ be an acylic conjunctive first order formula of polynomial size stating that $G$ is bipartite or not, for every graph $G$ of size $ n$. As the set of bipartite graphs can be enumerated with a constant delay, by the dichotomy theorem of  Bagan~\cite{Bag09}, $\phi_n$ is CCQ, i.e., of star size $1$. Therefore, thanks to the work of Durand and Mengel~\cite{DM11}, we know that $P(\phi) \in \VP$. At last, as $P(\phi) = f^{\multimap}$, $f^{\multimap}$ is $\VNP$-complete and in $\VP$, which is impossible if $\VNP \neq \VP$. Therefore $\phi_n$ cannot exist.

\section{The easy cases}

\begin{lem}\label{easy_case_1}
If $H$ has a loop, then $(f_n^H)$ is in $\VAC$.
\end{lem}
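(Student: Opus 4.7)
The plan is very short because the loop collapses the homomorphism condition entirely. First I would observe that if $H$ has a loop at some vertex $v$, then for \emph{any} graph $G$ on $[n]$, the constant map $V(G) \to \{v\}$ is a homomorphism: every edge $(u,w) \in E(G)$, including self-loops, is sent to the loop $(v,v) \in E(H)$. Hence $\Phi_n^H(\bar \epsilon) = 1$ for every $\bar\epsilon \in \{0,1\}^{|E(K_n)|}$.

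Plugging this into the definition of $f_n^H$ and distributing, the sum over $\bar\epsilon$ factors as a product over edges:
\[
f_n^H \;=\; \sum_{\bar\epsilon \in \{0,1\}^{|E(K_n)|}} \prod_{e \in E(K_n)} x_e^{\epsilon_e} \;=\; \prod_{e \in E(K_n)} (1 + x_e).
\]
This identity is the whole content of the lemma.

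To conclude that $(f_n^H) \in \VAC$, I would exhibit the obvious depth-$2$ unbounded-fan-in arithmetic circuit: a layer of $\binom{n}{2}$ addition gates computing $1 + x_e$ for each edge $e \in E(K_n)$, feeding into a single unbounded-fan-in product gate. The size is $O(n^2)$ and the depth is constant, independent of $n$, so $(f_n^H) \in \VAC$.

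There is no real obstacle here; the only thing worth noting is that the argument works whether or not $H$ has other (non-loop) edges, and whether or not $H$ has isolated vertices, since the single loop alone already witnesses every map as a homomorphism. The same observation will be used implicitly in the companion easy case where $H$ has no edges at all, but there the formula for $f_n^H$ degenerates differently and must be handled separately.
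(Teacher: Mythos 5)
Your proof is correct and follows essentially the same route as the paper's: the constant map to the looped vertex makes every graph homomorphic to $H$, so $f_n^H = \prod_{e \in E(K_n)}(1+x_e)$, which a constant-depth unbounded-fan-in circuit of size $O(n^2)$ computes. The only cosmetic difference is that you count the depth as $2$ where the paper says $3$; either way it is constant, so the conclusion stands.
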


\begin{proof}
	Let $v$ be a looped vertex of $H$. Then every graph $G$ is homomorphic to $ H $: we just have to send all vertices of $ G $ to $ v $. Therefore,
\[f^H_n(\bar x) = \sum_{\bar \epsilon \in \{0,1\}^{|E(K_n)|}}  \prod_{e \in E(K_n)} x_e^{\epsilon_n} = \prod_{e \in E(K_n)} \left (1 + x_e \right) \]
This product can be computed easily by an arithmetic circuit of polynomial size, depth $3$ and unbounded fan-in for the $\times$-gates.
\end{proof}

The case where $ H $ has no edges is quite obvious, we just have to notice that if $H_1$ and $H_2$ are two  bihomomorphic graphs, i.e., there exists a homomorphism from $H_1$ to $H_2$ and vice versa, then, for all $n$, $f_n^{H_1} = f_n^{H_2}$. Let us consider now our graph $H$ with no edges. It is bihomomorphic to the graph $H_0$, composed of a single vertex. Furthermore the only graphs homomorphic to $H_0$ are those with no edges. Therefore $f_n^{H_0} = 1$.

Let us quote a very useful technical result, already known in~\cite{Bur00}.

\begin{lem}\label{compo-homo}
Let $(f_n)$ be a p-family. Let us write $\CH[k]{f_n}$ for the homogeneous component of degree $k$ of $f_n$.

Then for any sequence of integers $(k_n) $ there exists a c-reduction from the homogeneous component to the polynomial itself:
	\[\left (\CH[k_n]{f_n} \right) \leq_c \left(f_n\right)\]
\end{lem}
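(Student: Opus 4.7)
The plan is to prove Lemma~\ref{compo-homo} by polynomial interpolation in an auxiliary scalar, a standard trick that works over any field of characteristic $0$ (and hence over $\mathbb{Q}$, which is our setting of interest).

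Since $(f_n)$ is a p-family, its degree is bounded by some polynomial $d(n)$. Introduce a formal scalar $t$ and consider the univariate polynomial
\[
F_n(t) := f_n(t x_1, \ldots, t x_{m(n)}) = \sum_{k=0}^{d(n)} t^k \, \CH[k]{f_n}(x_1, \ldots, x_{m(n)}),
\]
where $m(n)$ is the (polynomially bounded) number of variables of $f_n$. Thus the homogeneous components appear as the coefficients of a polynomial of degree at most $d(n)$ in $t$, evaluated at the scaled inputs.

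Now I would pick $d(n)+1$ pairwise distinct rationals $t_0, t_1, \ldots, t_{d(n)}$ (e.g.\ $t_i = i$, which is available in $\mathbb{Q}$), and perform $d(n)+1$ oracle calls to $f_n$ on the scaled inputs $(t_i x_1, \ldots, t_i x_{m(n)})$. Each call $y_i := f_n(t_i x_1, \ldots, t_i x_{m(n)})$ uses a linear number of auxiliary $\times$-gates to form the scaled inputs from $\bar x$. This yields the linear system $y_i = \sum_{k=0}^{d(n)} t_i^k \, \CH[k]{f_n}(\bar x)$, whose coefficient matrix is the Vandermonde matrix $(t_i^k)_{i,k}$, which is invertible because the $t_i$ are distinct. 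Consequently, for every $k$ there exist rational constants $\alpha_{k,0}, \ldots, \alpha_{k,d(n)}$ (depending only on the $t_i$ and on $k$, not on $\bar x$) such that
\[
\CH[k]{f_n}(\bar x) = \sum_{i=0}^{d(n)} \alpha_{k,i} \, y_i.
\]
Specializing $k$ to $k_n$, this expresses $\CH[k_n]{f_n}$ using $d(n)+1 = \mathrm{poly}(n)$ oracle calls to $f_n$ and $\mathrm{poly}(n)$ additional arithmetic gates, which is exactly the definition of a c-reduction $(\CH[k_n]{f_n}) \leq_c (f_n)$.

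I do not expect a real obstacle here: the two things to be careful about are (i) that the scalars $t_i$ actually lie in the base field, which is fine over $\mathbb{Q}$ (and more generally over any infinite field, as the lemma is typically applied), and (ii) that the Vandermonde inversion gives constants independent of $\bar x$, so that the inverse coefficients $\alpha_{k_n,i}$ can be precomputed and hard-wired into the circuit without breaking the polynomial bound on the number of computation gates.
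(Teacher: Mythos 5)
Your proposal is correct and follows essentially the same route as the paper: both scale the inputs by distinct field scalars (you use $t_i=i$, the paper uses $2^i$), observe that the homogeneous components satisfy a Vandermonde linear system in the oracle values $f_n(t_i\bar x)$, and recover $\CH[k_n]{f_n}$ as a fixed rational linear combination of polynomially many oracle calls. The only difference is the choice of interpolation nodes, which is immaterial over $\mathbb{Q}$.
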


\begin{proof}
Let $n$ be an integer and $d_n$ be the degree of $f_n$. Let us recall that the degree of $f_n$ is polynomially bounded in $n$. To begin, let us notice that for all $i$, $f_n(2^i\bar x)  = \sum_{k=0}^{d_n} (2^i)^k \CH[k]{f_n}(\bar x) $. If we write $c_k = \CH[k]{f_n}(\bar x)$, $f_i = f_n(2^i \bar x)$ and $w_k = 2^k$, those equations can be written as:

\[
\begin{pmatrix}
   	f_1 \\
	f_2 \\
   	\vdots \\
	f_{d_n} \\ 
\end{pmatrix} = 
\begin{pmatrix}
   	1 & 1 & \cdots & 1 \\
   	\omega_1 & \omega_2 & \cdots & \omega_d \\
   	\vdots & \vdots & \vdots & \vdots \\ 
   	\omega_1^{d_n} & \omega_2^d & \cdots & \omega_{d_n}^{d_n}
\end{pmatrix}
\begin{pmatrix}
   	c_1 \\
	c_2 \\
   	\vdots \\
	c_{d_n} \\ 
\end{pmatrix}
\]

Let us write $ V $ the square matrix of this equation. It is a Vandermonde matrix with non negative integers. As the coefficients $w_k$ are  distinct, this matrix is invertible. Therefore there exist some rationals $w_{i,k}^*$ such that
$c_k = \sum_{i=0}^{d_n} w_{i,k}^*f^n(2^i\bar x)$.

 So we have, for all $n$ and all $k \leq {d_n}$, $L^{f_n}(\CH[k]{f_n}) = \mathcal{O}(n \times {d_n})$. Let $(k_n)$ be a sequence of integers. If $k_n > d_n$, then $\CH[k_n]{f_n} = 0$. Thus for all $L^{f_n}(\CH[k_n]{f_n}) \leq \mathcal{O}(n \times d_n)$. As $ d_n $ is polynomially bounded by $n$, we have just constructed the c-reduction we were looking for.
\end{proof}

This lemma is still true if we take the homogeneous compenent in some variables only. Let $f(\bar x, \bar y)$ be a polynomial and $\CHH[k]{\bar y}{f}$ the homogeneous compenent of size $k$ in variables $\bar y$, i.e., considering $\bar x$ as constant. Then for any p-family $(f_n)$ and any sequence of integers $k_n$,$\left (\CHH[k]{\bar y}{f_n} \right) \leq_c \left(f_n \right)$.

\begin{rem}
If $C$ is a circuit that compute of size $s$ a polynomial $f$ of degree $d$, we know that there exists an arithmetic circuit of size $d^2s$ that compute all homogeneous component of $f$ (Theorem 2.2 in~\cite{SY10} for example). With the previous lemma we can see that if $C$ is a arithmetic formula of size $s$ that compute $f$ of degree $d$, then for any $k \leq d$, we can built an arithmetic formula of size $\mathcal O(d^2s)$ that compute the homogeneous component of degree $k$ of $f$ (this formula must use constant of $\mathbb Q$).
\end{rem}

\section{Reduction to bipartite graphs}
One of the main idea of Dyer and Greenhill is to reduce the problem of counting homomorphisms from $G$ to $H$ to a simpler graph $H$. Here we have a similar result but the graph we obtain is extremely simple.
\begin{prop}\label{reduction-aux-biparties}
	Let $\multimap$ be the graph with two vertices and one edge. Then for any graph $H$ with no loops and at least one edge:
\[f_n^{\multimap}  \leq_c f_n^H\]
\end{prop}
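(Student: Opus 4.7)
The plan is to split on whether $H$ is bipartite. If $H$ is bipartite, then any $G \to H$ has no odd cycle, and conversely, since $\multimap \to H$ (because $H$ has an edge and no loops), any bipartite $G$ satisfies $G \to \multimap \to H$. Thus $f_n^H = f_n^{\multimap}$ and the reduction is the identity substitution. From here on I assume $H$ is non-bipartite, and I set $k = \chi(H) \geq 3$.

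For non-bipartite $H$, I would build a fixed ``gadget'' $W$ of auxiliary vertices (of size depending only on $H$) and attach it to $[n]$ in such a way that, inside $f_{n+|W|}^H$, any homomorphism of the augmented graph to $H$ is forced to send the main vertices $[n]$ into the two endpoints $\{u,v\}$ of some fixed edge of $H$. Since $u \neq v$, the induced map is a proper $2$-colouring of $G|_{[n]}$, so $G|_{[n]}$ is bipartite. To recover this contribution from the polynomial, I substitute a single fresh variable $y$ for every $x_e$ with $e$ incident to $W$ (whether inside $W$ or between $W$ and $[n]$), and leave the variables $x_{ij}$ with $i,j\in[n]$ untouched. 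The resulting polynomial $p(\bar x, y)$ has polynomially bounded degree in $y$, and the coefficient of $y^M$, where $M$ is the total number of $W$-incident edges, selects exactly those graphs in which every gadget edge is present; by the gadget property this coefficient is $f_n^{\multimap}(\bar x)$, which we extract by a c-reduction using Lemma~\ref{compo-homo} in the variable $y$ (in the form given in the remark after that lemma, applied to a single variable).

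For the illustrative case $H = K_k$, the gadget is simply a $K_{k-2}$ on $W$ with every vertex of $W$ joined to every vertex of $[n]$: in any $k$-colouring of the augmented graph, the clique $W$ uses $k-2$ distinct colours, so $[n]$ is restricted to the two remaining colours, forcing $G|_{[n]}$ to be bipartite; the converse extension (colour $W$ with the $k-2$ free colours) is immediate. The hard part is to produce such a gadget for an \emph{arbitrary} non-bipartite $H$: a general $H$ need not contain $K_k$ at all (for instance $C_5$ has $\chi = 3$ but no triangle), so the simple clique gadget will not suffice. Following Dyer--Greenhill, I would instead design, for each $z \in V(H)\setminus\{u,v\}$, a rigid sub-gadget which forces its ``root'' vertex to be mapped to $z$ in every homomorphism to $H$, and connect each main vertex to that root in a way that forbids $z$ as the image of main vertices; iterating over all such $z$ pins $[n]$ down to $\{u,v\}$. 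Making this rigidity argument work uniformly for every non-bipartite $H$, possibly after first replacing $H$ by its core, is where the real technical content lies.
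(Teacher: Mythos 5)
Your bipartite case is fine: when $H$ is loop-free, bipartite and has an edge, $G\to H$ iff $G$ is bipartite, so $f_n^H=f_n^{\multimap}$ exactly. But the non-bipartite case --- which is the whole content of the proposition --- is not actually proved. Your clique gadget works for $H=K_k$, and you yourself identify the obstruction for general $H$: a non-bipartite $H$ of chromatic number $k$ need not contain $K_k$, and the ``rigid sub-gadget forcing its root onto a prescribed vertex $z$'' that you invoke is not known to exist for an arbitrary $H$ (for a vertex- and edge-transitive target such as $C_5$ no gadget can pin a vertex to a specific $z$ rather than to its automorphism orbit, and even pinning the image of $[n]$ into a two-element set $\{u,v\}$ up to automorphism requires a genuine construction in the spirit of the Hell--Ne\v{s}et\v{r}il indicator arguments). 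Announcing that ``this is where the real technical content lies'' leaves precisely that content missing, so the proposal is a proof only for complete graphs $H$.

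The paper avoids all of this with a much lighter mechanism and no case split on bipartiteness. Lemma~\ref{reduction-de-H-a-H-prime} uses the Dyer--Greenhill identity: homomorphisms from $G'$ (which is $G$ plus a dominating vertex) to $H$ correspond to homomorphisms from $G$ to $H'$, the disjoint union of the induced neighbourhood graphs $H_i$. On the polynomial side this is realized by adding one vertex $n+1$, renaming all variables $x_{i,n+1}$ to $y$, extracting the homogeneous component of degree $n$ in $y$ via Lemma~\ref{compo-homo}, and setting $y=1$; this gives $f_n^{H'}\leq_c f_{n+1}^{H}$. Because $H$ is loop-free, the maximum degree of $H'$ is strictly smaller than that of $H$, so iterating a bounded (in $|H|$) number of times produces a $1$-regular graph, which is bihomomorphic to $\multimap$, whence $f_n^{\multimap}\leq_c f_n^H$. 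If you want to salvage your approach you would need to supply the gadget construction for every non-bipartite $H$; the neighbourhood-iteration route makes that unnecessary.
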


\begin{lem}\label{reduction-de-H-a-H-prime}
	Let $H=(V,E)$ be a graph with no loops and at least one edge. For all $i \in V $, let us call $H_i$ the subgraph of $H$ induced by the neighbours of $i$ and $H'$ the disjoint union of all those neighbourhood graphs. Then
\[f_n^{H'} \leq_c f_n^H\] 
\end{lem}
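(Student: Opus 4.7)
My plan is to c-reduce $f^{H'}_n$ to $f^H_n$ via an apex-vertex construction combined with Lemma~\ref{compo-homo}. For the first step, consider $f^H_{n+1}(\bar x, \bar z)$, where $\bar x$ indexes edges of $K_n$ among the $n$ original vertices and $\bar z$ indexes the edges from a new ``apex'' vertex $u$ to the originals. Applying Lemma~\ref{compo-homo} to the group $\bar z$, I c-reduce to the homogeneous component of degree $n$ in $\bar z$, in which $u$ is connected to every original. Setting $\bar z = \bar 1$ in this component yields a polynomial $q_n(\bar x) \leq_c f^H_n$ whose coefficient on $\prod_{e \in E(G)} x_e$ is the indicator that $G$ extended by $u$ admits a homomorphism to $H$. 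Such a homomorphism chooses $i := \phi(u) \in V(H)$ and forces $\phi(V(G)) \subseteq N_H(i)$, so this coefficient equals $[\exists i\in V(H) : G \to H_i]$.

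For connected $G$ this coincides with $[G \to H']$, so $q_n$ and $f^{H'}_n$ agree on connected monomials. The discrepancy is for disconnected $G = C_1 \sqcup \cdots \sqcup C_m$: $[G \to H']$ lets each component $C_j$ pick its own target $H_{i_j}$, whereas $q_n$'s coefficient demands a single $i$ accommodating all components at once — a genuine difference whenever the admissibility sets $\mathcal{I}_{C_j} := \{i : C_j \to H_i\}$ are each nonempty but have empty intersection. To bridge this, I would move to a multi-apex construction inside $f^H_{n+n}$, adding $n$ apex vertices, setting apex--apex variables to $0$, and iteratively applying Lemma~\ref{compo-homo} to extract the multi-degree $(1, \ldots, 1)$ component in the apex-to-original variables grouped by original vertex. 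After setting the remaining apex variables to $1$, the resulting polynomial's coefficient on $\prod x_e$ enumerates the functions $\sigma : [n] \to [n]$ for which there exist $\phi : G \to H$ and $\psi : [n] \to V(H)$ with $\phi(v) \in N_H(\psi(v))$ for every $v$; the condition $[G \to H']$ corresponds exactly to the sub-enumeration where $\sigma$ is constant on each component of $G$. Isolating this sub-enumeration, for instance via a Möbius inversion over the partition lattice of $[n]$ applied to a polynomial-size family of analogous multi-apex oracle calls, should yield $f^{H'}_n$.

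The main obstacle is this last isolation step: the component structure of $G$ is encoded only indirectly in the $\bar x$-variables, and because $H$ has no loops there is no adjacency gadget on the apex vertices that can force two distinct apexes to share an image in $H$. So the component-constancy of $\sigma$ cannot be built in geometrically and must be obtained algebraically. Checking that the corresponding Möbius combination of multi-apex polynomials really collapses to $f^{H'}_n$ using only polynomially many $f^H$-oracle calls and polynomial-size arithmetic glue — thereby giving a bona fide c-reduction — is where the bulk of the technical work lies.
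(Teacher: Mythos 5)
Your first paragraph is, essentially verbatim, the paper's entire proof: the paper forms $f^H_{n+1}$, replaces the variables on edges incident to the new vertex $n+1$ by a single variable $y$, extracts the degree-$n$ homogeneous component via Lemma~\ref{compo-homo}, sets $y=1$, and identifies the result with $f^{H'}_n$ by invoking Dyer and Greenhill's observation that homomorphisms $G'\to H$ correspond to homomorphisms $G\to H'$. It goes no further --- there is no multi-apex construction. So the polynomial you call $q_n$ is exactly the paper's $\tilde f^H_n$, and up to that point you have matched the intended argument.

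Where you diverge is instructive. The mismatch you isolate --- the coefficient of $q_n$ is $[\exists i\in V(H):G\to H_i]$, while that of $f^{H'}_n$ is $[$every connected component $C$ of $G$ satisfies $\exists i: C\to H_i]$ --- is genuine: taking $H$ to be the disjoint union of an apexed triangle and an apexed Gr\"otzsch graph, one gets $H_u=K_3$ and $H_w=\mathrm{Gr}$ as two of the neighbourhood graphs, no single $H_i$ admits both $K_3$ and $\mathrm{Gr}$, and so $G=K_3\sqcup\mathrm{Gr}$ (padded with isolated vertices) maps to $H'$ but $G'$ does not map to $H$. Dyer and Greenhill's identity concerns the \emph{number} of homomorphisms of a \emph{connected} $G$, and the paper's one-line identification $\Phi^H_n(G')=\Phi^{H'}_n(G)$ silently assumes it transfers to the existence predicate for arbitrary $G$; your critique of the first step is therefore sound. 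The problem is that your repair is not a proof: a M\"obius inversion over the partition lattice of $[n]$ ranges over a super-polynomial (Bell-number) collection of partitions, so it cannot by itself supply the polynomially many oracle calls and polynomial-size arithmetic glue that a c-reduction requires; you offer no mechanism for forcing $\sigma$ to be constant on components (and, as you note, the looplessness of $H$ rules out the obvious gadget); and you explicitly leave the isolation step open. As it stands the proposal establishes $q_n\leq_c f^H_n$ --- which is what the paper actually proves --- but not the stated reduction $f^{H'}_n\leq_c f^H_n$; the gap you would need to close is precisely the one the paper's own proof elides.
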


\begin{proof}
	For any graph $G$, let us call $G'$ the graph built from $G$ by adding a new vertex $v$ and joining it to every vertex of $G$. Dyer and Greenhill have noticed in their paper~\cite{DG00} that the number of homomorphisms from $G'$ to $H$ is equal to the number of homomorphisms from $G$ to $H'$. Let us see what this means for our polynomials.

Let $\tilde{f}^H_n(\bar x)$ be the polynomial built from $f^H_{n+1}$ by replacing all $x_{i,n+1}$ variables by $y$, by taking the homogeneous component of degree $n$ in $y$ and by giving value $1$ to $y$: $\tilde{f}^H_n(\bar x) = \CHH[n]{y}{f_{n+1}^H(\bar x,y)}_{|_{y=1}}$.

$\tilde{f}^H_n$ enumerates all graphs of size $n+1$ homomorphic to $H$ and having a vertex $v$ linked to all others, i.e.,  graphs $G'$ homomorphic to $H$.
\[\tilde{f}^H_n(\bar x)  = \sum_{G \text{ graph of size }n}\Phi^H_n(G') \bar x^{\bar G} 
	 = \sum_{G \text{ graph of size }n}\Phi^{H'}_n(G) \bar x^{\bar G} 
	 := f^{H'}_n(\bar x) \]

where $\Phi^H_n(G) = 1$ if there exists a homomorphism between $G$ and $H$, $0$ otherwise. Then $\tilde{f}^H_n = f_n^{H'}$ and consequently there is a c-reduction from $(f^{H'}_n)$ to $(f_{n+1}^H)$.

\end{proof}

\begin{proof}[Proof of the proposition~\ref{reduction-aux-biparties}]

Since $H$ has no loops, the degree of $H'$ is strictly lower than that of $H$. Indeed, for any vertex $j$ of $H$, any vertex $k$ linked to $j$ in $H'$ is also linked to $j$ in $H$. So the degree of $H'$ cannot grow. Furthermore, if $i$ is a maximal degree ($d$) vertex of $H$ and if $j$ is a vertex of $H$, then either $i$ is not linked to $j$ and so does not appear in $H_j$, or it is linked to $j$ and, in $H_j$, it cannot be linked to more than $d-1$ vertices ($|V_i \cap V_j|$, knowing that $j \notin V_j$). The degree of the vertices of $H'$ corresponding to the vertex $i$ is therefore strictly  lower than $d$.

If we apply the process of Lemma~\ref{reduction-de-H-a-H-prime} to $H$ a finite number of times we obtain a $1$-regular graph. This graph is bihomomorphic to $\multimap$. Therefore we have the expected reduction: $f_n^{\multimap}  \leq_c f^H_n$
\end{proof}

\section{The bipartite case}
Now we have to demonstrate the following proposition.
\begin{prop}
Let $\multimap$ be the graph with two vertices and one edge. Then $f_n^{\multimap} $ is VNP-complete.
\end{prop}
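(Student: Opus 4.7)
The plan is to exploit the hereditary property of bipartite graphs — every bipartite graph on $[n]$ is a spanning subgraph of some complete bipartite graph $K_{V,V^c}$ — in order to bridge $f_n^{\multimap}$ and the cut polynomial $\cut_n$. Concretely, one has the generating-function identity
\[
\sum_{V\subseteq[n]}\prod_{\substack{i\in V\\ j\in V^c}}(1+y_{ij}) \;=\; \sum_{G\text{ bipartite}} 2^{c(G)}\prod_{e\in E(G)} y_e ,
\]
where $c(G)$ is the number of connected components of $G$ (counting isolated vertices). The left-hand side is essentially $\cut_n$ after the shift $x_{ij}\mapsto 1+y_{ij}$, and the right-hand side differs from $f_n^{\multimap}$ only by the vertex-dependent weight $2^{c(G)}$. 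Establishing a c-reduction $\cut_n\leq_c f_n^{\multimap}$ (up to an orientation of the edges, as Lemma~\ref{bip-complet-oriente} suggests) will then make $\cut_n$ the ``simpler polynomial'' alluded to in the sketch of the proof.

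First, I would remove the troublesome weight $2^{c(G)}$ by introducing auxiliary vertex-marking variables $u_1,\ldots,u_n$: replace each $(1+y_{ij})$ by $(1+u_iu_jy_{ij})$ and then, using Lemma~\ref{compo-homo} together with its variable-subset version in the remark following it, extract an appropriate multilinear part in the $u_i$'s. The markers serve to pin a distinguished vertex of each connected component to the side $V$, which divides out a factor of $2$ per component and isolates $f_n^{\multimap}$ from the weighted sum. Since shifting variables and taking homogeneous components in a subset of variables are both c-reductions, their composition yields the desired reduction $\cut_n\leq_c f_n^{\multimap}$.

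Second, I would prove $\cut_n$ is $\VNP$-complete over $\mathbb{Q}$ by c-reducing the permanent $\per_n$ to it. The idea is to make a polynomial-size substitution into the variables of $\cut_{n'}$ (for some polynomial $n'$ in $n$) that annihilates most cuts and leaves the surviving ones in bijection with the permutations of $[n]$, each surviving monomial agreeing with $\prod_i x_{i,\sigma(i)}$ up to a controllable factor. This is the main obstacle: cuts are rigid objects, determined solely by the subset $V$, whereas permutations form a much richer combinatorial class, so isolating a permutation-indexed subsum requires designing an auxiliary graph structure (for instance with duplicate or padding vertices) whose cuts precisely encode bijections. By contrast, the removal of the $2^{c(G)}$ weight in the first step, although delicate, is comparatively routine once one adopts the vertex-marking trick above.
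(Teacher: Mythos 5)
Your architecture inverts the paper's, and both of its halves have real gaps. The larger one is the second step: you defer the entire hardness of $\cut_n$ to a permanent-to-cut reduction that you yourself describe as ``the main obstacle'' without constructing it. That construction is exactly where the paper does its work: it does not go through the permanent at all, but reduces the clique generating function $\text{GF}(K_n,\text{clique})$ (VNP-complete by Bürgisser) to $F_n=\text{GF}(K_n,\text{complete bipartite})$ by a chain of homogeneous-component extractions on $K_{n,n}$ (Lemma~\ref{bip-complet}), and then gives an explicit projection $F_n\leq_p \cut_{2n}$ by a careful $0/1/y_{i,j}$ substitution (Lemma~\ref{bip-complet-oriente}). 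Without some such argument your proof of the proposition has no source of hardness.

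The first step also does not go through as described. The identity $\sum_V\prod_{i\in V,j\in V^c}(1+y_{ij})=\sum_{G\text{ bipartite}}2^{c(G)}\bar y^G$ is correct, but to use it for $\cut_n\leq_c f_n^{\multimap}$ you must manufacture the weighted polynomial $\sum_G 2^{c(G)}\bar y^G$ out of the unweighted oracle $f_n^{\multimap}$; you cannot substitute inside the product $\prod(1+y_{ij})$, because that product form is available only after the identity you are trying to exploit. Moreover the marking trick does the wrong thing: substituting $y_{ij}\mapsto u_iu_jy_{ij}$ into $f_n^{\multimap}$ turns a monomial $\bar y^G$ into $\bigl(\prod_v u_v^{\deg_G(v)}\bigr)\bar y^G$, so extracting the multilinear part in the $u_i$'s selects graphs by degree sequence (in fact it keeps only perfect matchings), not one distinguished vertex per connected component; the number of components is a global feature that edge-local markers cannot detect, and no factor $2^{c(G)}$ ever appears on the $f_n^{\multimap}$ side. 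The paper sidesteps the $2^{c(G)}$ issue entirely: it extracts from $f_n^{\multimap}$ its maximal monomials, the complete bipartite graphs, by adjoining two adjacent anchor vertices, forcing every original vertex to one anchor via homogeneous components in the anchor-edge variables, and then taking the top homogeneous component in $\bar x$ (Lemma~\ref{reduction_des_bip_aux_bip_complet}); the resulting polynomial $G_n$ is then related to $\cut_n$ in Lemma~\ref{bip-complet-max}. If you want to salvage your plan, that anchor-vertex construction is the working replacement for your vertex markers.
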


\subsection{Hereditary polynomials}\label{hereditarity}
$f_n^{\multimap}$ has a noteworthy property: if $(x_i)_{i \in I}$ is a monomial of $f_n^{\multimap} $, then for all $J \subset I$, $(x_j)_{j \in J}$ is also a monomial of $f_n^{\multimap} $. Indeed, this polynomial enumerates all graphs homomorphic to $\multimap$, i.e., all bipartite graphs, and every subgraph of a graph homomorphic to $\multimap$ is also homomorphic to $\multimap$. To build on this idea, we introduce the following definitions.

Let $f$ be a multilinear polynomial with $0$,$1$ coefficients only. $f$ is \textit{hereditary} if it satisfies the following property:  if $(x_i)_{i \in I}$ is a monomial of $f$, so is $(x_j)_{j \in J}$, for all $J \subset I$. This relation induces an order on all the monomials of $f$: $(x_i)_{i \in I}$ is greater that $(x_j)_{j \in J}$ if $J \subset I$. We call a monomial a \textit{generator} for $f$ if it is maximal for this order.

A hereditary polynomial is completely defined by its generators. We can naturally ask the opposite question. Let $g$ be a multilinear polynomial with $0$,$1$ coefficients. We write $\downarrow g$ for the \textit{son} of $g$, i.e., the smallest hereditary polynomial containing $g$.

\begin{lem}
	Let $g_n$ be a sequence of multilinear homogeneous polynomials with $0$,$1$ coefficients. Then 
\[(g_n) \leq_c (\downarrow g_n) \]
\end{lem}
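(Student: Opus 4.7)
The plan is to observe that $g_n$ can be recovered as a specific homogeneous component of $\downarrow g_n$, and then to invoke Lemma~\ref{compo-homo} to turn this recovery into a c-reduction.

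First I would identify the structure of $\downarrow g_n$. Since $g_n$ is multilinear, homogeneous of some degree $d_n$, and has $0,1$ coefficients, its monomials are exactly squarefree products $\prod_{i \in I} x_i$ with $|I| = d_n$. By definition $\downarrow g_n$ is the smallest hereditary polynomial containing $g_n$, so its monomial set is precisely $\{\prod_{j \in J} x_j : J \subseteq I \text{ for some monomial index set } I \text{ of } g_n\}$. All such $J$ satisfy $|J| \leq d_n$, and the equality $|J| = d_n$ forces $J = I$. Therefore the degree-$d_n$ homogeneous component of $\downarrow g_n$ is exactly $g_n$:
\[
g_n \;=\; \CH[d_n]{\downarrow g_n}.
\]

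Next, since $(g_n)$ is a p-family the degree $d_n$ is polynomially bounded in $n$, and $\downarrow g_n$ still has degree $d_n$ (adding smaller submonomials does not raise the degree). Applying Lemma~\ref{compo-homo} to the p-family $(\downarrow g_n)$ with the sequence of integers $(d_n)$ yields
\[
\bigl(\CH[d_n]{\downarrow g_n}\bigr) \;\leq_c\; (\downarrow g_n).
\]
Combining with the identification above gives $(g_n) \leq_c (\downarrow g_n)$, which is the claim.

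There is no real obstacle here: the content of the lemma is packaged entirely in Lemma~\ref{compo-homo}, and the only thing one needs to check by hand is the elementary fact that isolating the top-degree part of $\downarrow g_n$ recovers $g_n$, which follows immediately from homogeneity of $g_n$. The one technical point worth flagging in the write-up is that the argument relies on $(g_n)$ being a p-family (so that $d_n$ is polynomially bounded); this is implicit in the framework since c-reductions are only defined on p-families, but it is the condition that lets the lemma apply.
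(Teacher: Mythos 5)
Your proof is correct and follows exactly the paper's own argument: the paper likewise observes that $g_n$ is the top-degree homogeneous component of $\downarrow g_n$ (since $g_n$ is homogeneous and hereditary closure only adds monomials of strictly smaller degree) and then invokes Lemma~\ref{compo-homo}. Your write-up merely spells out the details more explicitly, including the polynomial bound on the degree.
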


\begin{proof}
	Let $n$ be a integer and $d$ the degree of all the monomials of $g_n$. The generators of $\downarrow g_n$ are simply the monomials of degree $d$, $g_n = \CH[d]{\downarrow g_n}$. The reduction is given by Lemma~\ref{compo-homo}.
\end{proof}

The reduction $(g_n) \leq_c (\downarrow g_n) $ is probably not true in the general case: for instance $f_n = (\prod_{i=1}^n x_i) + per(\bar x) $ is $\VNP$-complete, but its son is the sum of all monomials of size less than $n$, and it is therefore in $\VAC$.  However, we can hope to generalize this equality to all \textit{pure} polynomials, which we define as polynomials with no pair of comparable monomials.

In our case, we try to demonstrate that $f_n^{\multimap} $ is VNP-complete. This polynomial is hereditary. In Lemma~\ref{bip-complet}, we will demonstrate that the polynomial $F_n$, which enumerates all graphs composed of some isolated vertices and a complete bipartite connected component, is VNP-complete. Now $f_n^{\multimap} $ is the son of this polynomial: $\downarrow F_n = f_n^{\multimap} $.

Unfortunately this polynomial $F_n$ is not pure, some of its monomials are comparable. In Lemma~\ref{bip-complet-max} we demonstrate that the polynomial $(G_n)$, enumerating all the complete bipartite graphs, is also VNP-complete. $f_n^{\multimap} $ is also the son of $G_n$, but  this polynomial is pure!

As we have no general proof of the reduction from a pure polynomial to its son (yet), we will now give an ad hoc demonstration of the reduction from $G_n$ to $f_n^{\multimap} $.

At the end, we will have demonstrated the following chain, for any graph $H$ with no loop and at least one edge.
	\[ \left (  \text{GF}(K_n, \text{clique})\right )\leq_c (F_n)\leq_p (\cut_n)\leq_c (G_n) \leq_c (f^{\multimap}_n) \leq_c (f^H_n)\]

As the generating function of clique is $\VNP$ complete, we will have demonstrated Theorem~\ref{theoreme}

\begin{lem}\label{reduction_des_bip_aux_bip_complet}
Let $f_n^{\multimap} $ be the polynomial enumerating all graphs homomorphic to $\multimap$. Let $G_n$ be the polynomial enumerating all complete bipartite graphs. Then
\[G_n \leq f_n^{\multimap} \]
\end{lem}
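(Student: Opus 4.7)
The polynomial $G_n$ is exactly the generators-part of the hereditary polynomial $f_n^{\multimap}$ described in Section~\ref{hereditarity}: each of its monomials is the edge set of a complete bipartite graph $K_{A,[n]\setminus A}$, which is a maximal bipartite monomial because adding any further edge would create an odd cycle. My plan is therefore to extract these generators from $f_n^{\multimap}$ by a partition-wise isolation step followed by an aggregation step.

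For the isolation step, fix $A \subseteq [n]$ with $1 \in A$ and $A \neq [n]$, and set $B := [n]\setminus A$. Substituting $x_{ij}\leftarrow 0$ in $f_n^{\multimap}$ whenever $\{i,j\}\subseteq A$ or $\{i,j\}\subseteq B$ kills the contribution of every bipartite graph whose edges are not all between $A$ and $B$. What survives is exactly $\prod_{i\in A,\,j\in B}(1+x_{ij})$ — because every subgraph of a complete bipartite graph is itself bipartite — and its top-degree homogeneous component is the $K_{A,B}$-monomial $\prod_{i\in A,\,j\in B}x_{ij}$ of $G_n$. Both the substitution and the top-degree extraction (the latter via Lemma~\ref{compo-homo}) cost only polynomially many operations together with a single oracle call to $f_n^{\multimap}$.

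The main obstacle is the aggregation: naively summing this contribution over all $2^{n-1}$ choices of $A$ exceeds the polynomial budget of a c-reduction. The plan is to encode the partition sum algebraically by introducing $n$ fresh variables $t_1,\ldots,t_n$ and using the substitution $x_{ij}\leftarrow x_{ij}(t_i+t_j-2t_it_j)$, which on $\bar t\in\{0,1\}^n$ specialises to the restriction of the previous paragraph for the partition $A=\{i:t_i=0\}$. All partition contributions are thus packaged into a single polynomial in $\bar x,\bar t$ obtainable by one substituted query to $f_n^{\multimap}$, and the multi-variable form of Lemma~\ref{compo-homo} noted in the Remark — iterated for the $\bar x$-homogeneous components and for the $\bar t$-degrees — should extract $G_n$ with polynomially many oracle calls.

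The delicate point, and the ad hoc ingredient of the argument, is that the top $\bar x$-degree $|A|\cdot|B|$ depends on $|A|$, so the top-component extraction does not commute uniformly with the $\bar t$-summation over partitions. The bookkeeping has to stratify the aggregation by the edge count $k(n-k)$, extract the relevant homogeneous piece within each stratum, and reassemble the strata through interpolation in $\bar t$, so that the entire procedure still fits inside a polynomial-size oracle circuit for $f_n^{\multimap}$, yielding the required c-reduction $G_n\leq_c f_n^{\multimap}$.
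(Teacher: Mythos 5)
Your isolation step for a single fixed partition $\{A,B\}$ is correct: killing the within-class variables leaves $\prod_{i\in A,\,j\in B}(1+x_{ij})$, whose top-degree part is the $K_{A,B}$ monomial. The gap is in the aggregation, and it is structural rather than a matter of bookkeeping. After the substitution $x_{ij}\leftarrow x_{ij}(t_i+t_j-2t_it_j)$ you have one polynomial $h(\bar x,\bar t)$ whose \emph{evaluations} at the $2^{n-1}$ points $\bar t\in\{0,1\}^n$ are the per-partition restrictions; but a c-reduction allows only polynomially many oracle calls and gates, so you cannot sum over these points, and homogeneous components in $\bar t$ are not a substitute for that sum. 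Concretely, if you first extract the $\bar x$-homogeneous component of degree $k(n-k)$ from $h$, you retain \emph{every} bipartite graph with $k(n-k)$ edges, each weighted by $\prod_{\{i,j\}\in E(G)}(t_i+t_j-2t_it_j)$, not only the complete bipartite ones: the per-partition zero-substitution that eliminated, say, a non-complete bipartite graph with $k(n-k)$ edges is no longer in force once all partitions are superposed. No extraction of $\bar t$-degrees separates these afterwards --- the total $\bar t$-degree is $2k(n-k)$ for all of them, and even the vertex-degree sequence does not characterize $K_{A,A^c}$ among bipartite graphs with $k(n-k)$ edges. Finally, ``interpolation in $\bar t$'' over $n$ variables needs exponentially many evaluation points. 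So the crux of the lemma --- handling all $2^{n-1}$ partitions inside a polynomial-size oracle circuit --- is left unresolved.

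The paper closes exactly this gap with a different gadget: it works inside $f^{\multimap}_{n+2}$, adding two anchor vertices $n+1$ and $n+2$ joined by an edge of weight $w$, with edges from $[n]$ to $n+1$ weighted $y$ and to $n+2$ weighted $z$. Extracting degree $1$ in $w$ forces the anchor edge to be present, so the anchors lie on opposite sides of any bipartition and no vertex can be joined to both; extracting degree $k$ in $y$ and $n-k$ in $z$ then forces every vertex of $[n]$ to be adjacent to exactly one anchor, which records the bipartition $(V,V^c)$ with $|V|=k$ \emph{inside each surviving monomial}. Only then does the degree-$k(n-k)$ component in $\bar x$ force completeness, because all $\bar x$-edges are already confined to run between $V$ and $V^c$. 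This uses three scalar auxiliary variables and $n$ strata, hence polynomially many oracle calls via Lemma~\ref{compo-homo}. To repair your argument you would need an analogous device that stores the partition in the degrees of $O(1)$ variables rather than in $n$ of them.
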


\begin{proof}
$G_n$ can be written as
	$G_n\left( \left(y_e\right)_{e \in E(K_n)}\right) = \frac{1}{2} \sum_{V \subset [n]} \prod_{v \in V} \prod_{v' \in V^c} y_{v,v'}$. We express $G_n$ in the following way:
	\[G_n = \left (\sum_{k=1}^n \CHH[(n-k)k]{\bar x}{\CHH[k]{y}{\CHH[n-k]{z}{\CHH[1]{w}{f_{\multimap}^{n+2}(\bar x)}}}}\right )_{|_{w,y,z=1}}\]

Where $x_{n+1,n+2} = w$ and for all $i\in [n]$, $x_{i,n+1} = y$ and $x_{i, n+2} = z$.

Indeed, $G_n$ enumerates all bipartite graphs. Each operation above consisting in taking a homogeneous component eliminates some graphs from the list. We will demonstrate that, at the end, the polynomial will only enumerate complete bipartite graphs.

In a first step, we add two vertices $n+1$ and $n+2$ and we give a weight $w$ to the edge between them. By taking the homogeneous component of degree $1$ in the variable $w$, we eliminate all graphs but those which have an edge between $n+1$ and $n+2$.

Then we use variable substitutions to label the edges leaving $n+1$ with the weight $y$ and those leaving $n+2$ with $z$. A vertex linked to $n+1$ cannot be linked to $n+2$, as our graphs are bipartite and the vertices $n+1$ and $n+2$ are linked. Let us write $V$ and $V'$ for the partition of vertices defined by the bipartite graph, with $n+2 \in V$ and therefore, as $n+1$ and $n+2$ are linked, $n+1 \in V'$ 

By taking next the homogeneous component of degree $n-k$ in $z$, we force that $|V'| \geq n - k$. Similarly, by taking the homogeneous component of degree $k$ in $y$, we force that $|V| \geq k$. Therefore, as $V$ and $V'$ are disjoint sets, we force that $V^c = V'$.

At the end, by taking the homogeneous component of degree $(n-k)k$ in the variables $\bar x$, we only keep bipartite graphs built on $V$ and $V^c$, with $|V| = k$ and which have $(n-k)k$ edges, i.e., complete bipartite graphs on $n$ vertices.

The reduction is then given by Lemma~\ref{compo-homo}. 
\end{proof}

\subsection{Bipartite graphs}
Let us demonstrate that $F_n$ is $\VNP$-complete by reducing the generating function of clique to $F_n$. Recall that $\downarrow F_n = f^{\multimap}_n$ but that $F_n$ is not pure. 

\begin{lem}\label{bip-complet} Consider the polynomial
	\[F_n\left(\left(x_e\right)_{e \in E(K_n)}\right) = \text{GF}\left(K_n,\text{complete bipartite}\right) = \sum_{E'} \prod_{e \in E'} x_e,\]
where the sum is done over all subsets $E' \subset E(K_n)$ such that the graph $(n, E')$ (i.e., the graph with $n $ vertices and $E'$ as set of edges) has one complete bipartite connected component and all other connected components reduce to a single vertex.

This p-family is VNP-complete for c-reductions.
\end{lem}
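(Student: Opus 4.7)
The plan is to establish a c-reduction from the clique generating function
\[\text{GF}(K_n,\text{clique}) = \sum_{V \subseteq [n]} \prod_{i \in V} y_i \prod_{\substack{\{i,j\}\subseteq V \\ i<j}} z_{\{i,j\}},\]
which is known to be $\VNP$-complete (Bürgisser~\cite{Bur00}), to $F_n$. Since $(F_n)$ lies in $\VNP$ by Valiant's criterion, this suffices to conclude $\VNP$-completeness.

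I would work inside $F_{2n}$, labelling its $2n$ vertices as two disjoint copies $L = \{l_1,\ldots,l_n\}$ and $R = \{r_1,\ldots,r_n\}$ of $[n]$. Projecting $x_{\{l_i,l_j\}} = x_{\{r_i,r_j\}} = 0$ for $i \ne j$ kills every monomial whose underlying complete bipartite subgraph has two vertices on the same side, so the surviving polynomial is
\[\sum_{\substack{U_1,U_2 \subseteq [n] \\ U_1,U_2 \ne \emptyset}} \prod_{(i,j) \in U_1 \times U_2} x_{\{l_i,r_j\}},\]
where $U_1$ and $U_2$ index the two sides of the bipartition. Then, using the polynomial-substitution freedom of a c-reduction, I would set
\[x_{\{l_i,r_i\}} = s\,y_i, \quad x_{\{l_i,r_j\}} = t\,z_{\{i,j\}} \text{ for } i<j, \quad x_{\{l_i,r_j\}} = t \text{ for } i>j,\]
where $s,t$ are fresh tag variables. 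The monomial attached to $(U_1,U_2)$ becomes
\[s^{|U_1\cap U_2|}\,t^{|U_1|\cdot|U_2|-|U_1\cap U_2|}\prod_{i \in U_1\cap U_2} y_i \prod_{\substack{(i,j)\in U_1\times U_2\\ i<j}} z_{\{i,j\}}.\]

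For each $k\in\{1,\ldots,n\}$ I would then extract the bihomogeneous component of bi-degree $(k,k(k-1))$ in $(s,t)$ by iterating Lemma~\ref{compo-homo} once on $s$ and once on $t$. The surviving pairs $(U_1,U_2)$ must simultaneously satisfy $|U_1\cap U_2|=k$ and $|U_1|\cdot|U_2|=k^2$; combined with $|U_1|,|U_2|\ge|U_1\cap U_2|=k$, the product equation forces $|U_1|=|U_2|=k$, hence $U_1=U_2=W$ for some $W$ of size $k$. The extracted polynomial is therefore $s^k t^{k(k-1)}$ times the $k$-clique generating function. Setting $s=t=1$, summing over $k$ from $1$ to $n$, and adding a constant $1$ for the empty clique $V = \emptyset$ recovers $\text{GF}(K_n,\text{clique})$ using only $O(n)$ oracle calls to $F_{2n}$, which is the required c-reduction.

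The main technical obstacle that the two tag variables are engineered to defeat is a multiplicity phenomenon that appears if one tries the naive substitution $s=t=1$ from the outset: a direct computation shows that the clique monomial on $V$ is then produced with multiplicity $2^{n-\max V}\cdot 2^{\min V - 1}$, coming from pairs of the form $U_1 = V\cup A$ and $U_2 = V\cup B$ with $A \subseteq \{i > \max V\}$ and $B \subseteq \{j < \min V\}$. This count depends on $V$ in a way that no single projection can invert, whereas the $(s,t)$-bidegree filter above dissolves the ambiguity cleanly by forcing $U_1 = U_2$, after which the clique monomials appear with multiplicity exactly $1$.
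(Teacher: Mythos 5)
Your proof is correct and follows essentially the same route as the paper's: both reduce $\text{GF}(K_n,\text{clique})$ to $F_{2n}$ restricted to $K_{n,n}$ (killing same-side edges) and then use homogeneous-component extraction via Lemma~\ref{compo-homo} to force the two sides $U_1,U_2$ of the bipartition to coincide, so that the diagonal substitution recovers the clique polynomial. The differences are only in bookkeeping: you replace the paper's three successive extractions (fixing $|V|$, $|V'|$ and $|V\cap V'|$ by means of two auxiliary apex vertices) with a single bidegree filter in the tags $(s,t)$ plus the neat arithmetic observation that $|U_1\cap U_2|=k$ and $|U_1|\,|U_2|=k^2$ force $U_1=U_2$; and your claim of ``$O(n)$ oracle calls'' undercounts --- extracting the bidegree-$(k,k(k-1))$ component costs $O(n^3)$ evaluations per $k$, hence $O(n^4)$ in total, which is still comfortably a c-reduction.
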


\begin{proof}
We will reduce the clique generating function to our polynomial: $g_n(\bar x) = \text{GF}(K_n, \text{clique})((x_e)_{e \in E(K_n)}) = \sum_{E'}\prod_{e \in E'} x_e,$, where the summation is on all $E' \subset E(K_n)$ such that the graph $(n,E')$ has a complete connected component (a clique) and all others reduce to a single vertex. Bürgisser has demonstrated in~\cite{Bur00} that this sequence is VNP-complete for p-reductions.

Let us rewrite our polynomial:
\[F_n(\bar x) = \frac{1}{2} \sum_{\substack{V,V' \subset [n] \\ V \cap V' = \emptyset}} \prod_{v \in V} \prod_{v' \in V'} x_{(v,v')}\]

The $\frac{1}{2}$ comes from the fact that each complete bipartite graph can be defined by two different couples of sets: $(V,V')$ and $(V',V)$; and that each couple of vertex sets correspond to a unique complete bipartite graph.

We will proceed to the reduction by beginning with the sequence $(F_n)$ and writing a series of sequences of polynomials $(f^k_n)_{n \in \mathbb N}$, each being reducible to the previous one.

Hereafter, we suppose the left vertices of $K_{n,n}$ being $[1,n]$ and the right ones being $[n+1,2n]$. We write $\widehat i$ for the $i$-th right vertex, i.e., $\widehat i = i+n$. Similarly, if $[n]$ are the left vertices of $K_{n,n}$, we write $\widehat{[n]}$ for the right ones, i.e., $\widehat{[n]} = [n+1,2n]$.

\[f^1_n(\bar x) = F_{2n}(K_{n,n}) = \sum_{V \subset [n]} \sum_{V' \subset \widehat{[n]}} \prod_{i \in V} \prod_{j \in V'} x_{i,j}\]

Let $V$ and $V'$ be two disjoint sets of vertices. If $V$ has some of its vertices both in the right side and the left side (let us say for instance $i$ in the left side), then we can suppose that $V'$ has a vertex $j$ in the left side. Hence $x_{(i,j)} = 0$ appears in the weight of $V',V$. Therefore this weight is zero.

The only complete bipartite graphs enumerated in $F_{2n}(K_{n,n})$ are those built on $V$ and $V'$ with $V$ a subset of the left vertices and $V'$ a subset of the right ones. The summation on all the $V$ subset of the left vertices avoids repetitions and thus the $\frac{1}{2}$ coefficient is not necessary.

\[f^2_n(\bar x, y) = f^1_{n+1}(\bar x, x_{n+1,\widehat i} = 0, x_{i,\widehat{n+1}} = y) = \sum_{V \subset [n+1]} \sum_{V' \subset \widehat{[n+1]}} \prod_{i \in V} \prod_{j \in V'} x_{i,j}\]

If $V$ and $V'$ are vertex subsets, let $\omega(V,V')$ be their weight: $\omega(V,V') := \prod_{i \in V} \prod_{j\in V'} x_{i,j}$. Let $V \subset [n]$.
\begin{itemize}
	\item If $n+1 \in V$ then $\forall V' \subset \widehat{[n+1]}$, $\omega(V,V') = 0$.
	\item If $\widehat{n+1} \in V'$ then $\forall V \subset [n]$, $\omega(V,V') = (\prod_{i \in V} \prod_{j\in V'-\{\widehat{n+1}\}} x_{i,j}) \times y^{|V|}$.
	\item Else, for all $V \subset [n]$ $\text{deg}_y(\omega(V,V')) = 0$.
\end{itemize}
In those polynomials, the degree of $y$ is a witness to the size of $V$. We now just have to take the homogeneous component in $y$ of degree $k$ to keep all the $V$ of size $k$:

\[f^{3,k}_n(\bar x) = \CHH[k]{y}{f^2_n}(\bar x,y=1) = \sum_{\substack{V \subset [n] \\ |V| = k}} \sum_{V' \subset \widehat{[n]}} \omega(V,V')\]

Let us act similarly for $V'$.

\[f^{4,k}_n(\bar x, y) = f^{3,k}_{n+1}(\bar x, x_{n+1,\widehat{i}} = y, x_{i,\widehat{n+1}} = 0)\]

Here the degree of $y$ is a witness to the size of $V'$ in monomials where $y$ is of degree at least $1$. Then

\[f^{5,k}_n(\bar x) = \CHH[k]{y}{f^{4,k}_n}(\bar x,1) = \sum_{V \subset [n], |V| = k} \sum_{V' \subset \widehat{[n]}, |V'|=k} \omega(V,V')\]

Let us write now 
\[f^{6,k}_n(\bar x, y) = f^{5,k}_n(\bar x, x_{i,i'} = y) = \sum_{V \subset [n], |V| = k} \sum_{V' \subset \widehat{[n]}, |V'| = k} \prod_{i\in V} \prod_{\substack{j \in V'\\ j\neq i}} x_{i,j} \times y^{|V \cap V'|}\]

And at last, we find the clique generating function:
\[g_n(\bar x) = \sum_{k=1}^n \CHH[k]{y}{f^{6,k}_n}(\bar x,1)\]

Indeed, in this polynomial we sum on all subsets $V \subset [n]$ of size $k$ and the degree of $y$ is a witness to the size of $V \cap V'$. Then if we keep only monomials for which the degree in $y$ is $k$, we only keep complete bipartite graphs built on $V$ and $V'$, with $|V|=|V'|=|V \cap V'|$, i.e., on the same subset of vertices $V$, on the left and on the right. Then: 

\[\sum_{k+1}^n \CHH[k]{y}{f^{6,k}_n}(\bar x,1) = \sum_{V \subset [n]} \prod_{i \in V} \prod_{j \in V} x_{i,\widehat{j}} \]

This polynomial is, by definition, $ g_n(\bar x)$, the clique generating function. Furthermore, as we announce it at the beginning of this demonstration, we have the following reductions:

\[f^2_n \leq_p f^1_n \leq_p F_n\]

And, for all $ n $, $L^{f^2_{n+1}}(f^{3,k}_n) = \mathcal{O}(nk)$ then $L^{f^2_{n+1}}(f^{4,k}_n) = \mathcal{O}(nk)$. Similarly $L^{f^{4,k}_{n+1}}(f^{5,k}_n) = \mathcal{O}(nk)$ then $L^{f^2_{n+2}}(f^{6,k}_n) = \mathcal{O}(n^2k^2)$. At last $L^{f^2_{n+2}}(g_n) = \sum_{k=1}^n\mathcal{O}(n^3k^3) = \mathcal{O}(n^7)$.

Thus we have shown a c-reduction from $g_n$ to $F_n$.
\end{proof}

As we mentioned before the son of $F_n$ is $f_n^{\multimap} $ (i.e., $\downarrow F_n = f_n^{\multimap} $), but $F_n$ is not pure. We therefore had to introduce the polynomial $G_n$ which is still a generator of $f_n^{\multimap} $ but is pure. For now we introduce another polynomial, the cut enumerator. We will use it to prove the completeness of $G_n$. But as mentioned in the introduction, theorem~\ref{intro_Burgisser}, this family had been introduced by Bürgisser~\cite{Bur00} as a family neither $\VP$ nor $\VNP$-complete in $\mathbb{F}_2$, if of course $\VP \neq \VNP$.

Be aware that, contrary to the rest of this paper, this polynomial will be built on non symmetric variables: $x_{i,j} $ will, generally, not be equal to $ x_{j,i}$.

\begin{lem}\label{bip-complet-oriente}
 	Let $\cut_n$ be the following polynomial
	\[\cut_n\left(\left(x_{i,j}\right)_{i,j \in [n]}\right) = \sum_{V \subset [n]} \prod_{v \in V} \prod_{v' \in V^c} x_{v,v'}\]
This p-family is VNP-complete for c-reductions.
\end{lem}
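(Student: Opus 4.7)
The plan is to establish (i) $\cut_n \in \VNP$ and (ii) a c-reduction $F_n \leq_c \cut_n$; together with the $\VNP$-completeness of $F_n$ from Lemma~\ref{bip-complet} this implies the lemma. For (i) I would rewrite $\cut_n = \sum_{\bar\epsilon \in \{0,1\}^n} \prod_{i \neq j} x_{i,j}^{\epsilon_i (1 - \epsilon_j)}$ by identifying $\bar\epsilon$ with the subset $V = \{i : \epsilon_i = 1\}$; the exponent lies in $\{0,1\}$ and is trivially $\VP$-computable, so Valiant's criterion applies.

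For (ii) the idea is to build a projection of $\cut_{2n}$ whose value is $2 F_n$. I work with $\cut_{2n}$ on the vertex set $[n] \cup \{1', \ldots, n'\}$, pairing each $i \in [n]$ with a copy $i'$, and encode a tripartition $\sigma : [n] \to \{V, V', W\}$ of $F_n$ by the joint placement of $(i, i')$ in a cut $(A, B)$ via $(A,A) \leftrightarrow V$, $(B,B) \leftrightarrow V'$, and $(B,A) \leftrightarrow W$. Setting $x_{i, i'} := 0$ kills the remaining configuration $(A, B)$, so cuts with nonzero product biject with the $3^n$ labelings $\sigma$. For the remaining substitutions I would set, for $i \neq j$ in $[n]$, $x_{i,j'} := x_{j,i'} := z_{\{i, j\}}$ (the symmetric $F_n$-variable), $x_{i', i} := 1$, and every other $x_{u,v}$ among $\{i, i', j, j'\}$ equal to $1$.

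A case-by-case analysis over the nine possible values of $(\sigma_i, \sigma_j)$ then shows that the product of cut-edges incident to $\{i, i', j, j'\}$ equals $z_{\{i, j\}}$ precisely when $\{\sigma_i, \sigma_j\} = \{V, V'\}$ and $1$ in every other case. Multiplying over all unordered pairs $\{i, j\}$ and combining with the easy per-vertex within-pair factor of $1$, the cut product for a given $\sigma$ equals $\prod_{v \in \sigma^{-1}(V),\, v' \in \sigma^{-1}(V')} z_{\{v, v'\}}$, i.e., exactly the summand of $F_n$ indexed by the ordered disjoint pair $(\sigma^{-1}(V), \sigma^{-1}(V'))$. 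Summing over $\sigma$ and comparing with the $\frac{1}{2}$ in the definition of $F_n$ gives $\cut_{2n}\big|_{\text{proj}} = 2 F_n$, so one final multiplication by the constant $\frac{1}{2}$ yields $F_n$ and shows $F_n \leq_c \cut_n$.

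The main obstacle is the nine-case verification: one must simultaneously ensure that $(V, V')$ and $(V', V)$ contribute $z_{\{i, j\}}$ while the seven other combinations — particularly the asymmetric $(V, W)$, $(V', W)$, $(W, V)$, $(W, V')$ — contribute $1$ rather than a spurious $z_{\{i, j\}}$. The asymmetry $x_{u, v} \neq x_{v, u}$ that the paper underlines is essential here: a fully symmetric substitution would collapse the construction to the complete-bipartite enumerator $G_n$ rather than to $F_n$.
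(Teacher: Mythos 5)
Your proposal is correct and follows essentially the same route as the paper: a substitution into $\cut_{2n}$ on a doubled vertex set, with $x_{i,i'}=0$ killing one of the four placements of each pair $(i,i')$ so that the surviving cuts biject with ordered disjoint pairs $(V,V')$ of $[n]$, yielding $2F_n$ and hence $F_n\leq_c \cut_{2n}$ via Lemma~\ref{bip-complet}. Your write-up is in fact slightly more careful than the paper's on two points the paper glosses over — the explicit factor of $2$ against the $\tfrac12$ in the definition of $F_n$, and the identification of the two directed variables $x_{i,j'},x_{j,i'}$ with the single undirected $F_n$-variable.
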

 
\begin{proof}
We will exhibit a p-reduction from $F_n=\text{GF}(K_n, \text{complete bipartite})$ to $\cut_n$. Because $\text{GF}(K_n, \text{complete bipartite})$ is complete for c-reductions, as seen in Lemma~\ref{bip-complet}, this will prove the lemma.

Let us suppose that the left vertices of $K_{n,n}$ are $[n]$, the right ones $[n+1,2n]$ and let us write $\widehat i$ for the $i$-th right vertex, i.e., $\widehat i = i + n$. Similarly, if $[n]$ is the set of left vertices, we will write $\widehat{[n]}$ for the right ones. Let $(y_{i,j})_{i,j \in [2n]}$ be a set of new variables. We now give new values to the variables $x_{i,j} $ for all $ i,j \in [n]$:

\begin{enumerate}
	\item $ \forall i,j \in [n] $ $ x_{i,j} = 1$
	\item $ \forall i \in [n] $ $ x_{i,\widehat i} = 0$
	\item $ \forall i \in [n] $ $ x_{\widehat i,i} = 1$
	\item $ \forall i \in [n] \forall j \in \widehat{[n]} $ $ x_{j,i} = 1$
	\item $ \forall i \in \widehat{[n]} \forall j \in \widehat{[n]} $ $ x_{i,j} = 1$
	\item $ \forall i \in [n] \forall j \in \widehat{[n]} $ $ x_{i,j} = y_{i,j}$
\end{enumerate}

If we evaluate $\cut_{2n}$ on those variables, we compute $\text{GF}(K_n, \text{complete bipartite})$. Indeed, let $V \subset V(K_{n,n})$

\begin{itemize}
	\item If there exists  $i \in V$ such that $\widehat i \notin V$, then $\widehat i \in V^c$ and as $x_{i,\widehat i} = 0$ by (2), then $\omega(V) := \prod_{v \in V} \prod_{v' \in V^c} x_{v,v'} = 0$.
	\item If there exists $i \notin V$ such that $ \widehat i \in V$, then $i \in V^c $ and as $x_{\widehat i,i} = 1 $ by (3), the value of $\omega(V)$ is not necessarily zero.
\end{itemize}

$V$ is therefore of type $V_1 \cup V_2$ with $V_1$ in the left side, $V_2$ in the right one and $V_1' := \{\widehat n, n \in V_1\} \subset V_2$. Then if we write $V^c_1$ the complement of $V_1$ in the left side and $V^c_2$ the one of $V_2$ in the left side,

\begin{align*}
\omega(V) = & \prod_{v \in V} \prod_{v' \in V^c} x_{v,v'}\\
		  = & (\prod_{i \in V_1} \prod_{v' \in V^c} x_{i,v'})(\prod_{i' \in V_2} \prod_{v' \in V^c} x_{i',v'})
\\
		  = & (\prod_{i \in V_1} \prod_{j \in V_1^c} x_{i,j})(\prod_{i \in V_1} \prod_{j' \in V_2^c} x_{i,j'})(\prod_{i' \in V_2} \prod_{v' \in V^c} x_{i',v'}) \\
\end{align*} 

In the last line, the first parenthesis, $ (\prod_{i \in V_1} \prod_{j \in V_1^c} x_{i,j})$, takes value $1$ because of (1). The second one, $(\prod_{i \in V_1} \prod_{j \in V_2^c} x_{i,j})$, takes value $(\prod_{i \in V_1} \prod_{j \in V_2^c} y_{i,j})$ by (6).The last one, $ (\prod_{i \in V_2} \prod_{v' \in V^c} x_{i,v'}) $, take value $1$ because of (4) and (5). 

Then $\omega(V) = \prod_{i \in V_1} \prod_{j \in V_2^c} y_{i,j}$. But there is a bijection between  $\{V \subset V(K_{n,n}), \omega(V) \neq 0\}$ and $ \{(V,V'), V \subset [n], V' \subset [n]. V \cap V' = \emptyset\}$. Furthermore this bijection conserves weights. Thus, for the values of the variables $\bar x$ defined above, 

\[\cut_{2n}(\bar x) = \textsc{GF}(K_n, \text{complete bipartite})(\bar y)= F_n\]
\end{proof}

\begin{lem}\label{bip-complet-max}
	Let $G_n$ be the polynomial enumerating all complete bipartite graphs of size less than $n$. Then the p-family $(G_n)$ is VNP-complete for c-reductions.
\end{lem}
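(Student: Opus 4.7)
By Lemma~\ref{bip-complet-oriente}, $(\cut_n)$ is $\VNP$-complete for c-reductions, and $(G_n)$ belongs to $\VNP$ by a direct application of Valiant's criterion. Hence the lemma is equivalent to producing a c-reduction $(\cut_n) \leq_c (G_n)$.

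I would start with a doubled-vertex substitution into $G_{2n}$. Label the $2n$ vertices of $K_{2n}$ as $v_1, \ldots, v_n, w_1, \ldots, w_n$ and identify each $i \in [n]$ with the pair $(v_i, w_i)$. Set $y_{\{v_i, w_i\}} = 0$ so that $v_i$ and $w_i$ must lie on the same side of every surviving cut, which makes cuts of $[2n]$ correspond bijectively to cuts $V \subseteq [n]$ via $W = V \cup \widehat V$. Set $y_{\{v_i, v_j\}} = y_{\{w_i, w_j\}} = 1$ for $i \ne j$, so that intra-side edges do not contribute. The cross variables $y_{\{v_i, w_j\}}$ for $i \ne j$ are \emph{independent} for distinct ordered pairs $(i, j)$ (the edges $\{v_i, w_j\}$ and $\{v_j, w_i\}$ are distinct in $K_{2n}$), and they will be assigned values encoding the directional variables $x_{i, j}$ of $\cut_n$.

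With the naive choice $y_{\{v_i, w_j\}} = x_{i, j}$ for $i \ne j$, a direct expansion gives
\[
G_{2n}\big|_{\text{setting}} = \tfrac{1}{2}\sum_{V \subseteq [n]} \prod_{v \in V,\, v' \in V^c} x_{v, v'}\, x_{v', v},
\]
the \emph{symmetrised} cut polynomial rather than $\cut_n$. The reason is that every unordered cut pair $\{i, j\}$ yields two cross cut edges, $\{v_i, w_j\}$ and $\{v_j, w_i\}$, which force both $x_{i, j}$ and $x_{j, i}$ to appear together in the monomial of each cut $V$. Producing $\cut_n = \sum_V \prod x_{v, v'}$ rather than this symmetrised variant is the main obstacle I expect: the undirected nature of $G_n$'s edge variables does not let a single substitution distinguish the ``forward'' contribution $x_{v, v'}$ from its ``backward'' partner $x_{v', v}$.

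To overcome this, I would enrich the cross substitution with auxiliary indeterminates that track the direction of each cross cut edge, and then extract the forward contribution by iterated homogeneous components via Lemma~\ref{compo-homo}. A candidate decoration is $y_{\{v_i, w_j\}} = x_{i, j}\, r_j$ for independent indeterminates $r_1, \ldots, r_n$, which attaches to each cut the ``sink-side signature'' $\prod_{j \in V^c} r_j^{|V|} \prod_{j \in V} r_j^{|V^c|}$. Fixing $|V|$ via the total $r$-degree and then peeling off the contributions where the $r_j$-pattern is uniform on one of $V$ or $V^c$ (by a chain of extractions in individual $r_j$'s, analogous to the multi-step recipe of Lemma~\ref{bip-complet}) isolates precisely the forward monomial $\prod_{v \in V, v' \in V^c} x_{v, v'}$ of $\cut_n$. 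Each extraction costs a polynomial blow-up by Lemma~\ref{compo-homo} and only $O(n)$ are needed, so the overall construction is a c-reduction; composing it with the chain $(\text{GF}(K_n, \text{clique})) \leq_c (F_n) \leq_p (\cut_n) \leq_c (G_n)$ supplied by the previous lemmas then yields the $\VNP$-completeness of $(G_n)$.
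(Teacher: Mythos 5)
Your overall route is the same as the paper's: reduce $(\cut_n)$ to $(G_n)$ by splitting each vertex $i$ into a left copy and a right copy, putting the oriented variable $x_{i,j}$ on the edge from the $i$-th left copy to the $j$-th right copy, setting same-side edges to $1$ and the diagonal edges (left $i$ to right $i$) to $0$ so that the two copies of $i$ are forced onto the same side of the bipartition. (The paper additionally introduces two apex vertices $a$ and $b$ and a battery of homogeneous-component extractions to force every copy into the complete bipartite component, but its directional bookkeeping is otherwise identical to yours.) You have also put your finger on exactly the right difficulty: under this substitution each surviving cut $V$ contributes the single monomial $\prod_{i\in V,\,j\in V^c}x_{i,j}x_{j,i}$, because both cross edges $\{v_i,w_j\}$ and $\{v_j,w_i\}$ lie between the two sides and a \emph{complete} bipartite graph must contain both. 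This is the crux of the lemma; note that the paper's own closing identity, which presents the two directional products as a \emph{sum} over distinct terms rather than as factors of one and the same monomial, deserves the same scrutiny you applied to your naive substitution.

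The gap is in your proposed repair. Extracting a homogeneous component via Lemma~\ref{compo-homo} is a rational linear combination of evaluations of the substituted polynomial; its effect is to \emph{select} the monomials of a prescribed degree in the chosen variables, never to delete factors from inside a monomial. After the substitution $y_{\{v_i,w_j\}}=x_{i,j}r_j$, every monomial already contains the backward product $\prod_{i\in V^c,\,j\in V}x_{i,j}$ multiplied into the forward one, so the target monomial $\prod_{i\in V,\,j\in V^c}x_{i,j}$ is not a monomial of the substituted polynomial at all, and no chain of extractions in the $r_j$ (whose exponents depend only on $|V|$ and on whether $j\in V$, so they cannot even separate cuts of equal size) can make it appear. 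What your substitution does reach, after setting $x_{j,i}=1$ for $j>i$, is Bürgisser's undirected cut enumerator; but the proof of Lemma~\ref{bip-complet-oriente} relies precisely on giving $x_{i,\widehat i}$ and $x_{\widehat i,i}$ different values ($0$ versus $1$), which the symmetrised polynomial cannot simulate, so the completeness chain cannot simply be rerouted through it. As written, your argument establishes the reduction only up to the symmetrisation obstacle, and a genuinely different idea is needed to close it.
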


\begin{proof}
Classically, to pass from non oriented graphs to oriented graphs we double the number of vertices, $K_{2n}$, we call the $n$-th first as left vertices and the $n$-th last as right vertices. Then we labeled the edge between the $i$-th left vertex and the $j$-th right's by the oriented variable $x_{i,j}$. The edges between vertices of the same side are labeled with $1$ and horizontal edges, i.e., edges between the $i$-th left vertex and the $i$-th right's are labeled with $0$, as we do not want loops. Let us write $K'$ this new graph.

$G_n$ evaluate on this graph will enumerate every complete bipartite graphs. By taking homogeneous component we will reduce this list until we find the cut enumerator.

First, let us add to $K'$ two new vertices, $a$ and $b$. The new edge we create are labeled by:
\begin{itemize}
\item The edge between $a$ and $b$ is labeled by a new variable $t$,
\item The edge between $a$ and any left vertex by $y$,
\item The edge between $a$ and any right vertex by $y'$,
\item The edge between $b$ and any left vertex by $z$,
\item The edge between $b$ and any right vertex by $z'$,
\end{itemize}

By taking the homogeneous component of size $1$ in $t$, we only keep in our list the graphs which have a edge between $a$ and $b$.

For a $k \leq n$, we slip the left side of the graph $K'$ into two sets: $V$ and $V^c$ by joining $k$ vertices of the left side to $a$ and $n-k$ left vertices to $b$ (i.e., by taking the homogeneous component of size $k$ in $y$ and the homogeneous component of size $n-k$ in $z$). A vertex cannot be linked to both $a$ and $b$, as our graphs must always be bipartite. We name $V$ the set of left vertices linked to $a$, and therefore $V^c$ is the set of left vertices linked to $b$.

Similarly, we split the right side of the graph $K'$ into two sets: $V'$, the set of the $k$ right vertices linked to $a$ and $V'^c$ the set of the $n-k$ right vertices linked to $b$.

A vertex in $V$ cannot be linked to a vertex in $V'$ as they are both linked to $a$. Similarly a vertex in $V^C$ cannot be linked to a vertex in $V'^C$ because they are both linked to $b$. Furthermore, every edge between a vertex of $V$ and a vertex of $V^c$, or between a vertex of $V'$ and a vertex of $V'^c$ is labeled by $1$, as they are on the same side. Therefore, the only edge labeled with a $x_{i,j}$ variable that participate in $G_n$ are those between a vertex of $V$ and a vertex of $V'^c$ and those between one of $V'$ and one of $V'^c$.

If the $i$-th left vertex of $K'$ is in $V$ and the $i$-th right vertex of $K'$ is in $V'^c$, as $G_n$ enumerate only complete bipartite graphs, the monomial which represent this graph will have $x_{i,i} = 0$ and therefore will be null. Thus if we identify both the left and the right side to $[n]$, $V = V'$. Finally, if we evaluate $G_n$ on $K'$, and if evaluate the variables $t, y, y', z,$ and $ z'$ to $1$, we have

\[ \sum_{V \subset [n]}\left ( \prod_{i \in V} \prod_{j \in V'^c} x_{i,j} + \prod_{i \in V'} \prod_{j \in V^c} x_{i,j} \right)= 2 \cut(\bar x)\]
\end{proof}

\section*{Perspectives}
We have left open in section~\ref{hereditarity} the question of whether a pure multilinear polynomial is always reducible to its son, though we have shown the reduction for homogeneous polynomials and in the special case of  homomorphism polynomials. This question can be seen as a generalization of the  relationship between the permanent and the partial permanent $\per^*$ (cf~\cite{Bur00}). Note that the partial permanent is useful to show the completeness of other families of polynomials, so it would be interesting to obtain a general result.

The opposite question, namely whether the son of a polynomial always reduces to it, is also quite interesting. For instance, $\per$ is in $\VP$ in characteristic $2$. The same result for $\per^*$ was implicitly solved in~\cite{val01}. A reduction from $\downarrow f$ to f in the general case would provide another proof of this result and extend our understanding of these polynomials.

Another question raised by this paper is the relation between polynomials which list a graph property as homomorphic polynomials or generating functions on one hand and enumeration on the other. A brief foray into this subject has been made with corollary~\ref{enumeration}.

We intend to extend the study of this two questions in ours future researches.

\section*{Acknowledgements}
I thank to both of my doctoral advisors, A. Durand and G. Malod as well to L. Lyaudet for all his corrections.

\bibliographystyle{alpha}
\bibliography{biblio}

\end{document}